\newtheorem{observation}{Observation}
\newcommand{\outlet}{{\mathop{\rm outlet}}}
\newcommand{\support}{{\mathop{\rm support}}}
\newcommand{\crib}{{\mathop{\rm crib}}}
\newcommand{\unconf}{{\mathop{\rm unconf}}}
\newcommand{\tw}{{\mathop{\rm tw}}}
\newcommand{\hide}[1]{}
\newcommand{\BB}{{\mathcal B}}
\newcommand{\CC}{{\mathcal C}}
\newcommand{\II}{{\mathcal I}}
\newcommand{\TT}{{\mathcal T}}
\newcommand{\OO}{{\mathcal O}}
\newcommand{\PP}{{\mathcal P}}
\newcommand{\Ss}{{\mathcal S}}
\newcommand{\XX}{{\mathcal X}}
\begin{document}

\title{Positive-instance driven dynamic programming for treewidth
\footnote{A preliminary and an abridged version of this paper
was presented at the 25th European Sysmposium on Algorithms}} 

%\titlerunning{Short form of title}        % if too long for running head

\author{Hisao Tamaki}

%\authorrunning{Short form of author list} % if too long for running head

\institute{
H. Tamaki
\at
Kawasaki, 214-8571, Japan, 
Meiji University,
1-1-1 Higashi-Mata, Tama,  
Kawasaki, 214-8571, Japan, 
              \\
              Tel.: +81-44-934-7478\\
              \email{tamaki@cs.meiji.ac.jp}
}

\date{Received: date / Accepted: date}
% The correct dates will be entered by the editor

\maketitle

\begin{abstract}
Consider a dynamic programming scheme for a decision problem
in which all subproblems involved 
are also decision problems. An implementation of
such a scheme is {\em positive-instance driven} (PID), 
if it generates positive subproblem instances, but not
negative ones, building each on smaller positive instances.

We take the dynamic programming scheme
due to Bouchitt\'{e} and Todinca for treewidth computation,
which is based on minimal separators and potential maximal cliques,
and design a variant (for the decision version of the problem) 
with a natural PID implementation.
The resulting algorithm performs extremely well:
it solves a number of standard benchmark instances for which
the optimal solutions have not previously been known.
Incorporating a new heuristic algorithm for detecting safe
separators, it also solves all of the 100 public instances posed by the exact
treewidth track in PACE 2017, a competition on algorithm implementation.

We describe the algorithm, prove its correctness,
and give a running time bound 
in terms of the number of positive subproblem instances.
We perform an experimental analysis which supports
the practical importance of such a bound.

\keywords{treewidth \and 
tree decomposition \and 
dynamic programming \and 
positive-instance driven}
\end{abstract}

\section{Introduction}
Suppose we design a dynamic programming algorithm for some decision problem,
formulating subproblems, which are decision problems as well, and recurrences
among those subproblems. A standard approach is to list all subproblem
instances and scan the list from ``small" ones to ``large" , deciding the
answer, positive or negative, to each instance by means of these recurrences.
When the number of positive subproblem instances 
are expected to be much smaller than the total number of subproblem instances, 
a natural alternative is to generate positive instances only, using recurrences
to combine positive instances to generate a ``larger"
positive instance.  We call such a mode of dynamic programming execution
{\em positive-instance driven} or {\em PID} for short.  One goal of
this paper is to demonstrate that PID is not simply a
low-level implementation strategy but can be a paradigm 
of algorithm design for some problems.

The decision problem we consider is that of deciding,
given graph $G$ and positive integer $k$, if the
treewidth of $G$ is at most $k$.  This graph parameter
was introduced by Robertson and Seymour \cite{RS86} and
has had a tremendous impact on graph theory and on the
design of graph algorithms (see, for example, a survey \cite{BK08}).
The treewidth problem is NP-complete \cite{ACP87} but fixed-parameter tractable:
it has an $f(k)n^{O(1)}$ time algorithm for some fixed function $f(k)$ as
implied by the graph minor theorem of Robertson and Seymour \cite{RS04}, 
and an explicit
$O(f(k)n)$ time algorithm was given by Bodlaender \cite{Bod96}.
A classical dynamic programming algorithm due to Arnborg, Corneil, and
Proskurowsky (ACP algorithm) \cite{ACP87} runs in $n^{k + O(1)}$ time.
Bouchitt\'{e} and Todinca \cite{BT02} developed a more refined
dynamic programming algorithm (BT algorithm) 
based on the notions of 
minimal separators and potential maximal cliques, which lead
to algorithms running in $O(1.7549^n)$ time or in 
$O(n^5 \tbinom{\lceil (2n + k + 8)/3 \rceil}{k + 2})$ time
\cite{FKTV08,FV12}.
Another important approach to treewidth computation is
based on the perfect elimination order (PEO) of minimal chordal
completions of the given graph. 
PEO-based dynamic programming 
algorithms run in $O^*(2^n)$ time with exponential space and  
in $O^*(4^n)$ time with polynomial space \cite{BFKKT12}, where
$O^*(f(n))$ means $O(n^c f(n))$ for some constant $c$.

There has been a considerable amount of effort on implementing
treewidth algorithms to be used in practice and, prior to this work,
the most successful implementations 
for exact treewidth computation are all based on PEO.
The authors of \cite{BFKKT12} implemented the $O^*(2^n)$ time dynamic
programming algorithm and experimented on its performance, showing that it works well
for small instances. 
For larger instances, PEO-based branch-and-bound algorithms are known to work
well in practice \cite{GD04}.
Recent proposals for
reducing treewidth computation to SAT solving are also based on PEO
\cite{SH09,BJ14}.
From the PID perspective, this situation is somewhat surprising,
since it can be shown that each positive subproblem instance
in the PEO-based dynamic programming scheme corresponds to a combination
of an indefinite number of positive subproblem instances in the
ACP algorithm, and hence the number of positive subproblem
instances can be exponentially larger than that in the ACP algorithm. 
Indeed, a PID variant of the ACP 
algorithm was implemented by the present author and has won the first place in
the exact treewidth track of PACE 2016 \cite{DHJKKR17}, 
a competition on algorithm implementations, outperforming
other submissions based on PEO.  Given this success, a natural next step
is to design a PID variant of the BT algorithm, which is tackled 
in this paper.

The resulting algorithm performs extremely well, 
as reported in Section~\ref{sec:performance}.
It is tested on DIMACS graph-coloring instances \cite{JT93}, which have been
used in the literature on treewidth computation as standard benchmark instances 
\cite{GD04,BK06,Musliu08,SH09,BFKKT12,BJ14}. 
Our implementation of the algorithm solves all the instances
that have been previously solved (that is, with 
matching upper and lower bounds known) within 10
seconds per instance on a typical desktop computer and solves 13 out of the
42 previously unsolved instances.  For nearly half of the instances
which it leaves unsolved, it significantly reduces the gap between 
the lower and upper bounds. It is interesting to
note that this is done by improving the lower bound. 
Since the number of positive subproblem instances are much smaller when 
$k < \tw(G)$ than when $k = \tw(G)$, 
the PID approach is particularly good at establishing strong lower bounds.

We also adopt the notion of safe separators due to Bodlaender and
Koster \cite{BK06} in our preprocessing and design a new heuristic
algorithm for detecting safe separators. 
With this preprocessing, our implementation also solves
all of the 100 public instances posed by PACE 2017 \cite{PACE17}, 
the successor of PACE 2016.
It should be noted that
these test instances of PACE 2017 are much harder than those of
PACE 2016: the winning implementation of PACE 2016 mentioned above,
which solved 199 of the 200 instances therein, 
solves only 62 of these 100 instances of PACE 2017  
in the given time of 30 minutes per instance.

Adapting the BT algorithm to work in PID mode has turned out non-trivial.
Each subproblem instance in the BT algorithm for given graph $G$ and
positive integer $k$ takes the form of
a connected set $C$ of $G$ such that $N_G(C)$, 
the open neighborhood of $C$ in $G$,
is a minimal separator of $G$ with cardinality at most $k$.
For each such $C$, we ask if $C$ is {\em feasible}, in the
sense that there is a tree decomposition of the subgraph of
$G$ induced by $C \cup N_G(C)$ of width at most $k$ that has a bag
containing $N_G(C)$ (see Section~\ref{sec:prelim} for the
definition of a tree decomposition of a graph).
The difficulty of making the BT algorithm PID comes from the
fact that the recurrence for deciding if $C$ is feasible 
may involve an indefinite number of connected sets $C'$
such that $C' \subset C$. Thus, even if the number of
positive instances is small, there is a possibility that
the running time is exponential in that number.
We approach this issue by introducing an auxiliary structure
we call O-blocks (see Section~\ref{sec:oriented}) and
formulate a recurrences that are binary: a combination of
a feasible connected set and a feasible O-block may yield either
a larger feasible connected set or a larger feasible O-block.
Due to this binary recurrence, we obtain an upper bound
on the running time of our algorithm which is sensitive to
the number of subproblem instances 
(Observation~\ref{obs:run_time} in Section~\ref{sec:time}).
To support the significance of such a bound, we perform an
experimental analysis which shows the existence of huge gaps
between the actual number of combinatorial objects corresponding
to subproblems and the known theoretical upper bounds.

The rest of this paper is organized as follows.
In Section~\ref{sec:prelim}, we introduce notation, define basic
concepts and review facts in the literature. 
In Section~\ref{sec:oriented}, we precisely define the subproblems
in our dynamic programming algorithm and formulate recurrences.
We describe our algorithm and
prove its correctness in Section~\ref{sec:algorithm} and then 
analyze its running time in Section~\ref{sec:time}.
In Section~\ref{sec:experimental}, 
we describe our experimental analysis.
In Section~\ref{sec:implementation}, we describe some implementation
details. Finally, in Section~\ref{sec:performance}, we give details
of the performance results sketched above.

\section{Preliminaries} \label{sec:prelim}
In this paper, all graphs are simple, that is, without self loops or
parallel edges. Let $G$ be a graph. We denote by $V(G)$ the vertex set
of $G$ and by $E(G)$ the edge set of $G$.
For each $v \in V(G)$, $N_G(v)$ denote the set of neighbors of $v$ in $G$:
$N_G(v) = \{u \in V(G) \mid \{u, v\} \in E(G)$.
For $U \subseteq V(G)$, the {\em open neighborhood of $U$ in $G$}, denoted
by $N_G(U)$,  is the set of vertices adjacent to some vertex in $U$ but not
belonging to $U$ itself: $N_G(U) = (\bigcup_{v \in U} N_G(v)) \setminus U$.
The {\em closed neighborhood of $U$ in $G$}, denoted by $N_G[U]$, is defined
by $N_G[U] = U \cup N_G(U)$.  We also write $N_G[v]$ for $N_G[\{v\}] = N_G(v)
\cup \{v\}$. We denote by $G[U]$ the subgraph of $G$ induced by $U$: 
$V(G[U]) = U$ and $E(G[U]) = \{\{u, v\} \in E(G) \mid u, v \in U\}$.
In the above notation, as well as in the notation further introduced below,
we will often drop the subscript $G$ when the graph is clear from the context. 

We say that vertex set $C \subseteq V(G)$ is {\em connected in}
$G$ if, for every $u, v \in C$, there is a path in $G[C]$ between $u$
and $v$. It is a {\em connected component} of $G$ if it is connected
and is inclusion-wise maximal subject to this condition.
A vertex set $C$ in $G$
is a {\em component associated with $S \subseteq G$}, if
$C$ is a connected component of $G[V(G) \setminus S]$.
For each $S \subseteq V(G)$, we denote by
$\CC_G(S)$
the set of all components associated with $S$.
A vertex set $S \subseteq V(G)$ is a {\em separator} of $G$ if 
$|\CC_G(S)| > {\CC_G(\emptyset})|$, that is, if its removal
increases the number of connected components of $G$.  
A component $C$ associated with separator $S$ of $G$ is
a {\em full component} if $N_G(C) = S$.
A separator $S$ is a {\em minimal separator} 
if there are at least two full components associated with $S$.
This term is justified by this fact:
if $S$ is a minimal separator and $a$, $b$ vertices belonging to 
two distinct full components associated with $S$, 
then for every proper subset $S'$ of $S$, $a$ and $b$ belong to the
same component associated with $S'$; $S$ is a minimal set of vertices that
separates $a$ from $b$. A {\em block} is a pair $(S, C)$,
where $S$ is a separator and $C$ is a component associated with
$S$; it is a {\em full block} if $C$ is a full component, 
that is, $S = N(C)$.

Graph $H$ is {\em chordal} if every induced cycle of $H$ has length exactly
three. $H$ is a {\em minimal chordal completion of $G$} if it is chordal,
$V(H) = V(G)$, $E(G) \subseteq E(H)$, and $E(H)$ is minimal subject to these
conditions. A vertex set $\Omega \subseteq V(G)$ is a {\em potential maximal}
clique of $G$, if $\Omega$ is a clique in some minimal chordal completion of $G$.

A {\em tree-decomposition} of $G$ is a pair $(T, \XX)$ where $T$ is a tree
and $\XX$ is a family $\{X_i\}_{i \in V(T)}$ of vertex sets of $G$ such that
the following three conditions are satisfied. We call members of
$V(T)$ {\em nodes} of $T$ and each $X_i$ the {\em bag} at node $i$.   
\begin{enumerate}
  \item $\bigcup_{i \in V(T)} X_i = V(G)$.
  \item For each edge $\{u, v\} \in E(G)$, there is some $i \in V(T)$
  such that $u, v \in X_i$.
  \item The set of nodes $I_v = \{i \in V(T) \mid v \in X_i\}$ of
  $V(T)$ induces a connected subtree of $T$.
\end{enumerate}
The {\em width} of this tree-decomposition is $\max_{i \in V(T)} |X_i| - 1$.
The {\em treewidth} of $G$, denoted by $\tw(G)$ is the minimum width
of all tree-decompositions of $G$.
We may assume that the bags $X_i$ and $X_j$ are distinct from each other
for $i \neq j$ and, under this assumption, we will often regard
a tree-decomposition as a tree $T$ in which each node is a bag.

We call a tree-decomposition $T$ of $G$ {\em canonical} if
each bag of $T$ is a potential maximal clique of $G$ and,
for every pair $X$, $Y$ of adjacent bags in $T$, 
$X \cap Y$ is a minimal separator of $G$.
The following fact is well-known. 
It easily follows, for example, from
Proposition~2.4 in \cite{BT01}.
\begin{lemma} 
\label{lem:pmc_decompose}
Let $G$ be an arbitrary graph.
There is a tree-decomposition $T$ of $G$ of 
width $\tw(G)$ that is canonical.
\end{lemma}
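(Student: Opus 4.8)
The plan is to route the proof through the theory of minimal chordal completions and their clique trees. The starting point is the classical characterization of treewidth by chordal supergraphs: $\tw(G) = \min_H (\omega(H) - 1)$, where $H$ ranges over all chordal supergraphs of $G$ on the vertex set $V(G)$ and $\omega(H)$ denotes the maximum clique size of $H$. First I would observe that this minimum is attained by a \emph{minimal} chordal completion. Indeed, fix a chordal supergraph $H^*$ with $\omega(H^*) - 1 = \tw(G)$; by repeatedly deleting added edges whose removal preserves chordality we obtain a minimal chordal completion $H \subseteq H^*$, and since $\omega(H) \le \omega(H^*)$ while $E(G) \subseteq E(H)$ forces $\omega(H) - 1 \ge \tw(G)$, we conclude $\omega(H) - 1 = \tw(G)$.

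Next I would take a clique tree $T$ of the chordal graph $H$: a tree whose nodes are the maximal cliques of $H$ and which enjoys the running-intersection property (every chordal graph admits such a tree). I claim $T$ is the desired canonical tree-decomposition. It is a tree-decomposition of $H$ of width $\omega(H) - 1 = \tw(G)$: conditions (1) and (2) hold because every vertex and every edge of $H$ lies in some maximal clique, and condition (3) is exactly the running-intersection property. Since $E(G) \subseteq E(H)$, it is a fortiori a tree-decomposition of $G$, and its width is $\tw(G)$ by construction. Moreover, each bag of $T$ is a maximal clique of $H$, hence a clique in the minimal chordal completion $H$, and therefore a potential maximal clique of $G$ by definition.

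It remains to verify that for every pair of adjacent bags $X$, $Y$ of $T$, the intersection $X \cap Y$ is a minimal separator of $G$. A standard property of clique trees gives that $X \cap Y$ is a minimal separator of the chordal graph $H$. The crux — and the step I expect to be the main obstacle — is to transfer this from $H$ to $G$: I must argue that every minimal separator of a minimal chordal completion $H$ of $G$ is already a minimal separator of $G$. This is where the minimality of $H$ is essential (the statement fails for arbitrary chordal supergraphs), and it is precisely the content of the structural characterization of minimal triangulations due to Parra and Scheffler, which underlies Proposition~2.4 of \cite{BT01}: a minimal chordal completion of $G$ is obtained by completing into cliques a maximal family $\Ss$ of pairwise parallel minimal separators of $G$, and the minimal separators of the resulting chordal graph are exactly the members of $\Ss$. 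Granting this, $X \cap Y$ is a minimal separator of $G$, every pair of adjacent bags meets in a minimal separator, and hence $T$ is canonical, completing the proof.
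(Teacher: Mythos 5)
Your proof is correct and takes essentially the same route as the paper, which gives no details at all but simply remarks that the lemma ``easily follows from Proposition~2.4 in \cite{BT01}'' --- the Parra--Scheffler characterization of minimal triangulations that is also the crux of your argument. Your elaboration (optimal chordal completion reduced to a minimal one, its clique tree as the decomposition, and the transfer of minimal separators from $H$ to $G$ via that proposition) is precisely the standard argument that the paper's citation encapsulates.
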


The following local characterization of a potential maximal clique
is crucial.  We say that a vertex set $S \subseteq V(G)$ is {\em cliquish}
in $G$ if, for every pair of distinct vertices $u$ and $v$ in $S$,
either $u$ and $v$ are adjacent to each other or there is some $C \in \CC(S)$
such that $u, v \in N(C)$.  In other words, $S$ is cliquish if
completing $N(C)$ for every $C \in \CC(S)$ into a clique makes $S$ a clique.  
\begin{lemma}
\label{lem:pmc_charact}
(Theorem 3.15 in \cite{BT01})
A separator $S$ of $G$ is a potential maximal clique
of $G$ if and only if (1) $S$ has no full-component
associated with it and (2) $S$ is cliquish.
\end{lemma}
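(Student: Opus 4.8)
The statement is an equivalence, so I would prove the two implications separately, in both cases leaning on the classical theory of minimal chordal completions. The two facts I would isolate first, as a preparatory lemma, are: (i) relative to a maximal clique $\Omega$ of a minimal chordal completion $H$ of $G$, the components of $H - \Omega$ coincide with those of $G - \Omega$, each $C$ satisfying $N_H(C) = N_G(C) \subseteq \Omega$, and every minimal separator of $H$ is a minimal separator of $G$; and (ii) a non-edge $\{u,v\}$ of $G$ with $u,v$ in a common maximal clique $\Omega$ of $H$ is a fill edge of $H$ if and only if $u$ and $v$ have neighbors in a common component of $G - \Omega$, i.e. $u,v \in N_G(C)$ for some $C \in \CC_G(\Omega)$. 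Once (i) and (ii) are available, both directions reduce to short routing arguments. Throughout I read ``$\Omega$ is a potential maximal clique'' in the standard sense that $\Omega$ is a \emph{maximal} clique of some minimal chordal completion.

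For sufficiency (assume (1) and (2), conclude that $S$ is a potential maximal clique) I would build a completion explicitly and argue it is minimal, with $S$ a maximal clique. The construction is recursive on $|V(G)|$: complete $S$ into a clique --- by cliquishness this adds no edge beyond those forced by completing each $N_G(C)$, $C \in \CC_G(S)$ --- and for each component $C$ take, inductively, a minimal chordal completion $H_C$ of the graph obtained from $G[N_G[C]]$ by completing $N_G(C)$ into a clique. Let $H$ be the union. Then $H$ is chordal because it admits a clique tree whose central bag is $S$, with the bags of each $H_C$ attached to $S$ along $N_G(C)$; condition (1) gives $N_G(C) \subsetneq S$, hence $V(G) \setminus N_G[C] \neq \emptyset$, so each recursive call is on a strictly smaller graph and the induction is well founded. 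Condition (1) also makes $S$ a \emph{maximal} clique of $H$: any $x \notin S$ lies in some component $C$, and its $H$-neighbors outside $C$ lie in $N_G(C) \subsetneq S$, so $x$ misses $S \setminus N_G(C)$. The remaining, more delicate point is that $H$ is \emph{minimal}. Since $H$ adds no edge between distinct components and none joining a component $C$ to $V(G) \setminus N_G[C]$, the fill edges split into those internal to some $H_C$ (non-removable by induction) and those internal to $S$; for the latter, cliquishness supplies, for each non-edge $\{u,v\}$ of $S$, a component $C$ with $u,v \in N_G(C)$, which by fact (ii) forces the fill edge.

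For necessity (assume $S$ is a potential maximal clique, conclude (1) and (2)), fix a minimal chordal completion $H$ in which $S$ is a maximal clique. Condition (2) is immediate from fact (ii): a pair $u,v \in S$ that is a non-edge of $G$ is a fill edge of $H$, hence $u,v \in N_G(C)$ for some $C \in \CC_G(S)$, which is exactly cliquishness. For (1) I would use the clique tree of $H$ together with fact (i): if some $C \in \CC_G(S)$ were full, i.e. $N_G(C) = S$, take the clique-tree bag $K$ adjacent to $S$ on the side containing $C$, so $S \cap K$ is a minimal separator with $S \cap K \subsetneq S$. Every $w \in S$ with a neighbor in $C$ must lie in a bag on $K$'s side, hence in $S \cap K$; fullness makes this all of $S$, forcing $S \subseteq K$ and then $S = K$ by maximality of $S$, contradicting that $K$ is a distinct neighbor.

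The main obstacle, in both directions, is precisely fact (ii): proving that ``$u,v$ share a component neighborhood'' is equivalent to ``$\{u,v\}$ is a fill edge.'' Its forward use (necessity of (2)) and its backward use (the minimality step of sufficiency) both come down to the same bookkeeping --- exhibiting a chordless cycle through $\{u,v\}$ whose remaining vertices stay inside one component of $G - \Omega$, and reconciling $H$-adjacency along that cycle with $G$-adjacency at its endpoints. This is exactly where the absence of a full component is needed, to keep the relevant separators minimal (fact (i)) so that the cycle cannot be short-circuited by a triangulation edge. I would therefore prove (i)--(ii) carefully as a standalone lemma, after which the two implications follow by the routing arguments sketched above.
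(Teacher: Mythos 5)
The paper offers no proof of this lemma at all---it is imported verbatim as Theorem~3.15 of \cite{BT01}---so your attempt can only be measured against the classical Bouchitt\'{e}--Todinca argument, and your route is essentially that argument: for sufficiency, glue minimal chordal completions of the blocks $G\langle C\rangle$, $C \in \CC_G(S)$, onto the completed $S$; for necessity, use the characterization that a chordal completion is minimal iff deleting any single fill edge destroys chordality, plus a clique-tree argument for condition (1). The skeleton (chordality via gluing along cliques, maximality of $S$ from (1), well-foundedness of the recursion, the clique-tree contradiction for (1)) is sound. The genuine problem is your ``fact (ii)'' and how it is used. As stated it is vacuous: if $u,v\in\Omega$ for a clique $\Omega$ of $H$ and $\{u,v\}\notin E(G)$, then $\{u,v\}$ \emph{is} a fill edge of $H$ by definition, so the ``iff'' has a trivially true left-hand side. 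What your two directions actually need are two \emph{different} statements: for necessity, that in a \emph{minimal} completion every fill edge inside $\Omega$ has a component witness (this is the real content, and it rests on your fact (i), i.e.\ on Parra--Scheffler-type theory of minimal triangulations, which is itself essentially as deep as the lemma being proved); for sufficiency, that in your \emph{constructed} $H$, a fill edge $\{u,v\}\subseteq S$ with witness $C$ cannot be deleted without breaking chordality. Invoking fact (ii)---a statement about minimal chordal completions---to establish the latter is circular, since minimality of $H$ is exactly what is being proved at that point.

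The sufficiency step can be repaired, but not with the cycle you describe: a chordless cycle through $u,v$ ``whose remaining vertices stay inside one component'' need not exist (if the witnessing component is a single vertex $c$, every cycle through $u,c,v$ is a triangle). The correct argument takes a shortest $u$--$v$ path in $H$ with interior in $C$ (it exists since $u,v \in N_G(C)$ and $C$ is connected, has length at least $2$, and is chordless apart from the edge $\{u,v\}$ itself) and closes it through a vertex $w \in S \setminus N_G(C)$: in the constructed $H$, such a $w$ is adjacent to both $u$ and $v$ because $S$ is a clique of $H$, and to no interior vertex of the path because $N_H(x) \subseteq N_G[C]$ for every $x \in C$; this produces a chordless cycle of length at least four in $H - \{u,v\}$. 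The existence of $w$ is precisely condition (1)---$N_G(C) \subsetneq S$ because $C$ is not full---so this, rather than ``keeping the relevant separators minimal,'' is where the absence of a full component enters the minimality argument. Once fact (i) is actually proved (say via the correspondence between minimal chordal completions and maximal families of pairwise parallel minimal separators), your necessity direction does go through as sketched; but as written, the proposal both misstates its key lemma and applies it circularly, and these are the two places where all of the work of the theorem is concentrated.
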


It is also shown in \cite{BT01} that
if $\Omega$ is a potential maximal clique
of $G$ and $S$ is a minimal
separator contained in $\Omega$, then there
is a unique component $C_S$ associated with $S$
that contains $\Omega \setminus S$. We need an
explicit way of forming $C_S$ from $\Omega$ and $S$.

Let $K \subseteq V(G)$ be an arbitrary vertex set
and $S$ an arbitrary proper subset of $K$.
We say that a component $C \in \CC(K)$ is
{\em confined to $S$} if $N(C) \subseteq S$; otherwise it is
{\em unconfined to $S$}.
Let $\unconf(S, K)$ denote the set of
components associated with $K$ that are unconfined to $S$.
Define the {\em crib} of $S$ with respect to $K$, 
denoted by $\crib(S, K)$, to be $(K \setminus S)
\cup \bigcup_{C \in \unconf(S, K)} C$: it is the
union of $K \setminus S$ and 
all those components associated with $K$ that have
neighborhoods intersecting $K \setminus S$. 

The following lemma relies only on the second property of potential 
maximal cliques, namely that they are cliquish, 
and will be applied not only to potential maximal cliques but also
to separators with full components, which are trivially cliquish.

\begin{lemma}
\label{lem:crib}
Let $K \subseteq V(G)$ be a cliquish vertex set.
Let $S$ be an arbitrary proper subset of $K$.
Then, $\crib(S, K)$ is a full component
associated with $S$. 
\end{lemma}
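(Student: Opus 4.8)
The plan is to write $D = \crib(S, K)$ and to verify the two defining properties of a full component associated with $S$: that $D$ is a connected component of $G[V(G) \setminus S]$, and that $N(D) = S$. I would start by recording the disjoint decomposition
$$V(G) = S \sqcup (K \setminus S) \sqcup \bigcup_{C \in \unconf(S, K)} C \sqcup \bigcup_{C \in \CC(K) \setminus \unconf(S, K)} C,$$
which shows that $D \cap S = \emptyset$ and that $V(G) \setminus D$ is exactly $S$ together with the union of the components confined to $S$. Since $S$ is a proper subset of $K$ we have $K \setminus S \neq \emptyset$, so $D \neq \emptyset$. Two elementary facts drive everything that follows: for every $C \in \CC(K)$ we have $N(C) \subseteq K$, and $C$ is unconfined to $S$ if and only if it has a neighbor in $K \setminus S$.

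To show that $D$ is connected, I would connect all of $K \setminus S$ first and then attach the unconfined components. Given distinct $u, v \in K \setminus S$ that are nonadjacent, cliquishness of $K$ supplies a component $C \in \CC(K)$ with $u, v \in N(C)$; because $u \notin S$, this $C$ is unconfined and hence $C \subseteq D$, so $u$ and $v$ are joined by a path through the connected set $C$. Every unconfined component in turn has a neighbor in $K \setminus S$ and is thus attached to the already-connected set $K \setminus S$. This proves $D$ is connected.

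For $N(D) \subseteq S$, I would take $w \in N(D)$, so that $w \in V(G) \setminus D = S \cup \bigcup_{C \text{ confined}} C$, and argue that $w$ cannot lie in a confined component. Let $d \in D$ be a neighbor of $w$. If $d \in K \setminus S$ and $w$ lay in a confined component $C'$, then $d \in N(C') \subseteq S$, contradicting $d \notin S$; if instead $d$ lies in an unconfined component $C$, then $w \in N(C) \subseteq K$, and since $w \notin D$ this forces $w \in S$. Either way $w \in S$. Combined with the connectivity of $D$ and $D \cap S = \emptyset$, the inclusion $N(D) \subseteq S$ shows that $D$ is a connected component of $G[V(G) \setminus S]$; it remains to upgrade this to $N(D) = S$.

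Finally, for $S \subseteq N(D)$, I would fix some $t \in K \setminus S$ and, for each $s \in S$, apply cliquishness to the distinct pair $s, t \in K$: either $s$ is adjacent to $t \in D$, or there is $C \in \CC(K)$ with $s, t \in N(C)$, in which case $C$ is unconfined (it meets $K \setminus S$ at $t$), so $C \subseteq D$ and $s$ has a neighbor in $C$. Since $s \notin D$, in both cases $s \in N(D)$. Together with the previous paragraph this gives $N(D) = S$, so $D$ is a full component. I expect the main obstacle to be the case analysis establishing $N(D) \subseteq S$: one must keep careful track of which block of the partition a neighbor $w$ and its witness $d$ belong to, and invoke the right containment ($N(\text{confined}) \subseteq S$ versus $N(C) \subseteq K$) in each case. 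The two uses of cliquishness — once to glue $K \setminus S$ together for connectivity, and once to reach every vertex of $S$ for fullness — are the conceptual heart of the argument.
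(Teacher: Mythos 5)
Your proof is correct and follows essentially the same route as the paper's: cliquishness glues $K \setminus S$ together (with unconfined components attaching via their neighbors in $K \setminus S$) to get connectivity, neighbors of $\crib(S,K)$ are forced into $S$ to get maximality, and a second application of cliquishness to pairs in $S \times (K \setminus S)$ yields fullness. Your explicit case analysis for $N(D) \subseteq S$ is just the contrapositive of the paper's observation that vertices in confined components have no neighbor in $\crib(S,K)$, so there is no substantive difference.
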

\begin{proof}
Let $C = \crib(S, K)$.
We first show that $G[C]$ is connected. 
Suppose $K \setminus S$ has two distinct vertices $u$ and $v$.
Since $K$ is cliquish, either
$u$ and $v$ are adjacent to each other or there is some component
$C' \in \CC(K)$ such that $u, v \in N(C')$.
In the latter case, as $C'$ is unconfined to $S$, 
we have $C' \subseteq C$. 
Therefore, $u$ and 
$v$ belong to the same connected component of $G[C]$.
As this applies to every pair of vertices in 
$K \setminus S$, $K \setminus S$ is contained 
in a single connected component of $G[C]$.
Moreover, each component $C' \in \CC(K)$ contained
in $C$ is unconfined to $S$, by the definition of
$\crib(S, K)$, and hence has a neighbor in 
$K \setminus S$.
Therefore, we conclude that $G[C]$ is connected.
Each vertex $v$ not in $S \cup C$ belongs to
some component in $\CC(K)$ that is confined to
$S$ and hence does not have a neighbor in $C$.
Therefore, $C$ is a component associated with $S$.

To see that $C$ is a full component, 
let $u \in S$ and $v \in K \setminus S$ be arbitrary. 
Since $K$ is cliquish, 
either $u$ and $v$ are adjacent to each other
or there is some $C' \in \CC(K)$ such that 
$u, v \in N(C')$. As such $C'$ is unconfined to $S$
in the latter case, 
we conclude that $u \in N(C)$ in either case. 
Since this holds 
for arbitrary $u \in S$, we conclude that
$C$ is a full component associated with $S$.
\qed 
\end{proof}
\begin{remark}
As $\crib(S, K)$ contains $K \setminus S$,
it is clear that it is the only component
associated with $S$ that intersects $K$.
Therefore, the above mentioned assertion on
potential maximal cliques is a corollary 
to this Lemma.
\end{remark}

\section{Recurrences on oriented minimal separators}
\label{sec:oriented}
In this section, we fix graph $G$ and positive integer $k$ that are
given in the problem instance: we are to decide if
the treewidth of $G$ is at most $k$. We assume that $G$
is connected.  

For connected set $C \subseteq V(G)$, we denote by
$G\langle C \rangle$ the graph obtained from $G[N[C]]$ by
completing $N(C)$ into a clique: $V(G\langle C \rangle) = N[C]$ and
$E(G\langle C \rangle) = E(G[N[C]]) \cup \{\{u, v\} \mid u,v \in N(C),
u \neq v\}$.  We say $C$ is {\em feasible} if
$\tw(G\langle C \rangle) \leq k$.
Equivalently,
$C$ is feasible if $G[N[C]]$ has a tree-decomposition
of width $k$ or smaller that has a bag
containing $N(C)$.

Let us first review the BT algorithm \cite{BT01} adapting
it to our decision problem. 
We first list all minimum separators of cardinality $k$ or smaller
and all potential maximal cliques of cardinality $k + 1$ or smaller.
Then, for each pair of a potential maximal clique $\Omega$ and
a minimal separator $S$ such that $S \subset \Omega$, place a link
from $S$ to $\Omega$. To understand the difficulty of
formulating a PID variant of the algorithm, 
it is important to note that the pair $(\Omega, S)$ to be linked
is easy to find from the side of $\Omega$, but not the other way round.
Then, we scan the full blocks $(N(C), C)$ of minimal separators
in the increasing order of $|C|$ to decide if $C$ is feasible, using
the following recurrence: $C$ is feasible if and only if
there is some potential maximal clique $\Omega$ such that
$N(C) \subset \Omega$, $C = \crib(N(C), \Omega)$, and
every component $D \in \unconf(N(C), \Omega)$ is feasible.
Finally, we have $\tw(G) \leq k$ if and only if
there is a potential maximal clique $\Omega$ with 
$|\Omega| \leq k + 1$ such that
every component associated with $\Omega$ is feasible.

To facilitate the PID construction, we orient minimal separators
as follows. We assume a total order $<$ on $V(G)$.
For each vertex set $U \subseteq V(G)$, 
the {\em minimum element} of $U$, denoted by $\min(U)$, is the smallest
element of $U$ under $<$.
For vertex sets $U$ and $W$, we say 
{\em $U$ precedes $W$} and write $U \prec W$ if $\min(U) < \min(W)$.

We say that a connected set $C$ is {\em inbound} if
there is some full block associated with
$N(C)$ that precedes $C$; otherwise, it is {\em outbound}. 
Observe that if $C$ is inbound then $N(C)$ is a minimal
separator, since $N(C)$ has another full component associated with it and,
contrapositively, if $N(C)$ is not a minimal separator then $C$ is
necessarily outbound.
We say a full block $(N(C), C)$ is {\em inbound}
({\em outbound}) if $C$ is inbound (outbound, respectively).

\begin{lemma}
\label{lem:pmc_outbounds}
Let $K$ be a cliquish vertex set 
and let $A_1, A_2$
be two components associated with $K$.
Suppose that $A_1$ and $A_2$ are outbound.
Then, either $N(A_1) \subseteq N(A_2)$ or 
$N(A_2) \subseteq N(A_1)$.
\end{lemma}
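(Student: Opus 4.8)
The plan is to argue by contradiction. Suppose neither $N(A_1) \subseteq N(A_2)$ nor $N(A_2) \subseteq N(A_1)$; I would fix witnesses $a \in N(A_1) \setminus N(A_2)$ and $b \in N(A_2) \setminus N(A_1)$. Since $A_1, A_2 \in \CC(K)$, both neighborhoods are contained in $K$, and incomparability forces both to be \emph{proper} subsets of $K$: if, say, $N(A_1) = K$, then $N(A_2) \subseteq K = N(A_1)$, contradicting incomparability. This is exactly what lets me feed $N(A_1)$ and $N(A_2)$ into the crib construction.

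The heart of the argument is to manufacture, for each $A_i$, a \emph{companion} full component of $N(A_i)$ and then compare minima. Set $C_1 = \crib(N(A_1), K)$ and $C_2 = \crib(N(A_2), K)$. By Lemma~\ref{lem:crib} (using that $K$ is cliquish and each $N(A_i) \subsetneq K$), $C_1$ is a full component associated with $N(A_1)$ and $C_2$ a full component associated with $N(A_2)$. The decisive observation is an asymmetry in confinement: $A_1$ is \emph{confined} to $N(A_1)$, trivially since $N(A_1) \subseteq N(A_1)$, so $A_1 \cap C_1 = \emptyset$; whereas $A_2$ is \emph{unconfined} to $N(A_1)$ because $b \in N(A_2) \setminus N(A_1)$, so $A_2 \subseteq C_1$. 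As $C_1 \supseteq K \setminus N(A_1) \neq \emptyset$ is disjoint from $A_1$, it is a full component of $N(A_1)$ \emph{distinct} from $A_1$. Symmetrically, $C_2$ is a full component of $N(A_2)$ distinct from $A_2$, with $A_1 \subseteq C_2$.

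Now I would invoke the outbound hypotheses. As $A_1$ is outbound, no full block associated with $N(A_1)$ precedes it, so $C_1$ does not precede $A_1$; since distinct components have distinct minima, this gives $\min(A_1) < \min(C_1)$. Combined with $A_2 \subseteq C_1$, i.e. $\min(C_1) \le \min(A_2)$, I obtain $\min(A_1) < \min(A_2)$, that is, $A_1 \prec A_2$. The symmetric computation, using that $A_2$ is outbound and $A_1 \subseteq C_2$, yields $A_2 \prec A_1$, which is the desired contradiction.

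I expect the only real obstacle to be locating the right auxiliary object, namely the crib companions $\crib(N(A_i), K)$, and then checking the confinement bookkeeping that makes everything work: that each $A_i$ is confined to its own neighborhood (hence excluded from its crib) while the \emph{other} component is unconfined (hence swallowed by that crib). Once this asymmetry is in hand, the conclusion reduces to the trivial fact that $\prec$ is a strict order, so $A_1 \prec A_2$ and $A_2 \prec A_1$ cannot both hold. A minor point not to overlook is verifying $N(A_i) \subsetneq K$ at the outset, since Lemma~\ref{lem:crib} requires a proper subset.
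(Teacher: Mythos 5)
Your proof is correct and follows essentially the same route as the paper: form the cribs $C_i = \crib(N(A_i), K)$, observe that incomparability forces $A_2 \subseteq C_1$ and $A_1 \subseteq C_2$, and use outboundness to derive both $A_1 \prec A_2$ and $A_2 \prec A_1$, a contradiction. You merely spell out details the paper leaves implicit (properness of $N(A_i) \subsetneq K$, disjointness of each $A_i$ from its own crib, and the distinct-minima step), which is fine.
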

\begin{proof}
Let $K$, $A_1$, and $A_2$ be as above and suppose neither of 
$N(A_1)$ and $N(A_2)$ is a subset of the other. 
For $i = 1, 2$, let $C_i = \crib(N(A_i), K)$.
Since $N(A_2) \setminus N(A_1)$ is non-empty and 
contained in $K \setminus N(A_1)$, 
$A_2$ is contained in $C_1$. We have 
$A_1 \prec C_1$ as $A_1$ is
outbound and hence $A_1 \prec A_2$.  
A contradiction, 
since similarly we have $A_2 \prec A_1$.
\qed 
\end{proof}

Let $K$ be a cliquish vertex set. Based on the
above lemma, we define the {\em outlet}
of $K$, denoted by $\outlet(K)$, as follows.
If no non-full component associated with $K$ is outbound,
then we let $\outlet(K) = \emptyset$. Otherwise, 
$\outlet(K) = N(A)$, where $A$ is a non-full component associated
with $K$ that is outbound, chosen so that $N(A)$ is maximal.
We define $\support(K) = \unconf(\outlet(K), K)$,
the set of components associated with $K$ that are 
not confined to $\outlet(K)$. By Lemma~\ref{lem:pmc_outbounds},
every member of $\support(K)$ is inbound.

We call a full block $(N(C), C)$ an {\em I-block}
if $C$ is inbound and $|N(C)| \leq k$.
We call it an {\em O-block} if $C$ is outbound and 
$|N(C)| \leq k$.

We say that an I-block $(N(C), C)$ 
is {\em feasible} if $C$ is feasible.
We say that an O-block $(N(A), A)$ is feasible
if $N(A) = \bigcup_{C \in \CC} N(C)$
for some set $\CC$ of feasible inbound components.
Note that this definition of feasibility of an O-block 
is somewhat weak in the sense that we do not require
every inbound component associated with $N(A)$ to be feasible.

We say that a potential maximal clique $\Omega$ is 
{\em feasible} if 
$|\Omega| \leq k + 1$ and every $C \in \support(\Omega)$ is feasible.  

In order to formulate mutual recurrences among
feasible I-blocks, O-blocks, and potential maximal cliques,
we need the following auxiliary notion of {\em buildable}
potential maximal cliques.

Let $\Omega$ be a potential maximal clique
with $|\Omega| \leq k + 1$.
For each $C \in \support(\Omega)$, block
$(N(C), C)$ is an I-block, since $C$ is inbound as observed above and 
we have $|N(C)| \leq k$ by our assumption that $|\Omega| \leq k + 1$.
We say that $\Omega$ is {\em buildable}
if $|\Omega| \leq k + 1$ and either 
\begin{enumerate}
\item $\Omega = N[v]$ for some $v \in V(G)$, 
\item there is some subset $\CC$ of $\support(\Omega)$ such that
$\Omega = \bigcup_{D \in \CC} N(D)$ and every member of
$\CC$ is feasible, or
\item $\Omega = N(A) \cup (N(v) \cap A)$ for some feasible O-block
$(N(A), A)$ and a vertex $v \in N(A)$.
\end{enumerate}
It will turn out that 
every feasible potential maximal clique is buildable
(Lemma~\ref{lem:buildable_feasible}). 

\begin{lemma}
\label{lem:PMC_feasible}
We have $\tw(G) \leq k$ if and
only if $G$ has a feasible potential maximal clique
$\Omega$ with $\outlet(\Omega) = \emptyset$.
\end{lemma}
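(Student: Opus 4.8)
The statement is an equivalence, and the plan is to prove the two directions separately. Throughout I would first record a simplification: since any potential maximal clique $\Omega$ has no full component associated with it (Lemma~\ref{lem:pmc_charact}), the condition $\outlet(\Omega) = \emptyset$ simply asserts that \emph{every} component associated with $\Omega$ is inbound; in that case $\support(\Omega) = \unconf(\emptyset, \Omega) = \CC(\Omega)$, so that a feasible $\Omega$ with empty outlet is exactly a potential maximal clique with $|\Omega| \le k+1$ all of whose associated components are feasible. This reduces the lemma to showing that $\tw(G) \le k$ is equivalent to the existence of such an $\Omega$, which is essentially the content of the BT recurrence reviewed above; the real work lies in producing a correctly \emph{oriented} witness.

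For the backward direction, suppose $\Omega$ is feasible with $\outlet(\Omega) = \emptyset$. For each $C \in \CC(\Omega)$, feasibility gives a tree-decomposition $T_C$ of $G[N[C]]$ of width at most $k$ having a bag $B_C \supseteq N(C)$, and $N(C) \subseteq \Omega$. I would build a tree-decomposition of $G$ by taking a fresh central node with bag $\Omega$ and, for each $C$, attaching $T_C$ to it by an edge joining the central node to $B_C$. Verifying the three axioms is routine: every vertex and every edge of $G$ lies in $\Omega$ or inside some $G[N[C]]$, and the running-intersection condition holds because each $v \in N(C)$ occurs both in $\Omega$ and in $B_C$, through which the subtree of $T_C$ carrying $v$ meets the central bag. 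The width is $\max(|\Omega|-1,\ \max_C \mathrm{width}(T_C)) \le k$, so $\tw(G) \le k$.

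For the forward direction, assume $\tw(G) \le k$ and take a canonical tree-decomposition $T$ of width $\tw(G)$ (Lemma~\ref{lem:pmc_decompose}); every bag is then a potential maximal clique of size at most $k+1$, and for each edge $\{X, Y\}$ the separator $S = X \cap Y$ is a minimal separator with a full component on each side. I orient each edge towards the side whose full component has the $\prec$-smaller minimum vertex. Since $T$ is a tree this orientation is acyclic, so it has a sink $\Omega^*$: for every neighbor $Y$, writing $S_Y = \Omega^* \cap Y$ and letting $C_Y$ be the full component of $S_Y$ on the $Y$-side, the $\Omega^*$-side full component $D_Y$ satisfies $D_Y \prec C_Y$. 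I claim $\Omega^*$ is the desired witness. Feasibility of each $C \in \CC(\Omega^*)$ comes from restriction: restricting every bag of $T$ to $N[C]$ yields a tree-decomposition of $G[N[C]]$ of width at most $k$, and the bag at $\Omega^*$ restricts to exactly $N(C)$ (since $N(C) \subseteq \Omega^*$ while $C \cap \Omega^* = \emptyset$), supplying the required bag containing $N(C)$.

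It remains to show $\outlet(\Omega^*) = \emptyset$, which I expect to be the crux. Fix $C \in \CC(\Omega^*)$; it lies in the part $V_Y$ cut off by a unique neighbor $Y$, and running intersection gives $V_Y = C_Y$ and $N(C) \subseteq S_Y \subseteq \Omega^*$. Applying Lemma~\ref{lem:crib} with $K = \Omega^*$ and $S = N(C)$ produces a full component $C'' = \crib(N(C), \Omega^*)$ of $N(C)$, distinct from $C$ and lying on the $\Omega^*$-side. Because $N(C) \subseteq S_Y$, one checks the inclusion $D_Y \subseteq C''$, whence $\min(C'') \le \min(D_Y) < \min(C_Y) \le \min(C)$, the last inequality using $C \subseteq C_Y$. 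Thus $(N(C), C'')$ is a full block associated with $N(C)$ preceding $C$, so $C$ is inbound. As this holds for every $C \in \CC(\Omega^*)$ and $\Omega^*$ has no full component, $\outlet(\Omega^*) = \emptyset$. The main obstacles are precisely the inclusion $D_Y \subseteq C''$ together with the chain of $\prec$-comparisons that transports the global orientation at $S_Y$ down to the possibly smaller separator $N(C)$; the degenerate case where $T$ is a single bag (so $\CC(\Omega^*) = \emptyset$) is handled trivially.
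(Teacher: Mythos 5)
Your backward direction is correct and is essentially the paper's argument (glue the decompositions witnessing feasibility of the components onto a central bag $\Omega$), and your forward-direction plan---canonical tree-decomposition, edge orientation, sink $\Omega^*$, feasibility of components by restricting bags to $N[C]$---is also the paper's. The gap is in the step you yourself single out as the crux, and it is fatal as written. Your argument hinges on the claim that, for each neighbor $Y$ of the sink, ``running intersection gives $V_Y = C_Y$'', i.e.\ that everything on the $Y$-side beyond $S_Y$ is a single full component of $S_Y$; this also underlies the phrase ``\emph{the} full component of $S_Y$ on the $Y$-side''. Both are false: running intersection only gives that $S_Y$ separates the two sides; a side may carry several components of $G-S_Y$, several of them full. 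Concretely, let $G=K_{2,3}$ with parts $\{a,b\}$ and $\{c,d,e\}$, order $e<c<d<a<b$, and take the canonical tree-decomposition $\{a,b,c\}-\{a,b,d\}-\{a,b,e\}$. Every tree edge has separator $\{a,b\}$, whose full components are $\{c\},\{d\},\{e\}$; two of them lie on the same side of the first edge, and their union is disconnected in $G-\{a,b\}$, so it is not ``one full component''. Reading your rule in the natural way (compare the full component meeting $X$ with the one meeting $Y$), the first edge points to $\{a,b,c\}$ (as $c<d$) and the second to $\{a,b,e\}$ (as $e<d$), so $\Omega^*=\{a,b,c\}$ is a sink; yet the component $C=\{e\}$ of $G-\Omega^*$ is \emph{outbound}, since $e$ precedes both $c$ and $d$, so $\outlet(\Omega^*)=\{a,b\}\neq\emptyset$ and your conclusion fails. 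Your chain of inequalities breaks exactly at its last link: $C=\{e\}\not\subseteq C_Y=\{d\}$ and $\min(C_Y)=d>e=\min(C)$.

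The paper's orientation is genuinely different, and the difference is exactly what rescues the full-component case: an edge is oriented toward the side containing the \emph{outbound} full component of $S$, i.e.\ the $\prec$-least among \emph{all} full components of $S$, on whichever side it lies and whether or not it meets either bag. At a sink this places, for each neighbor $Y$, the $\prec$-least full component of $S_Y$ on the sink's side, so every full component of $S_Y$ on the far side is preceded by it; in the example above this rule forces the unique sink $\{a,b,e\}$, which does have empty outlet. A comparison of one chosen representative per side reproduces this only if the representative is the per-side $\prec$-least full component---which is the paper's rule in disguise---and not, as in your write-up, a component singled out by the (false) uniqueness claim. Two further cautions should you repair the proof along these lines: your inclusion $D_Y\subseteq C''$ also needs an argument when $N(C)=S_Y$ (there $D_Y$ and $C''$ are merely two components of $G-N(C)$, and ``the $\Omega^*$-side full component'' suffers the same non-uniqueness); and components $C$ with $N(C)\subsetneq S_Y$ are not settled by comparing full components of $S_Y$ at all---the paper itself disposes of them in a single asserted sentence---so that case needs separate, genuine care in any complete write-up.
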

\begin{proof}
Suppose first that $G$ has a feasible potential 
maximal clique
$\Omega$ with $\outlet(\Omega) = \emptyset$.
Note that $\support(\Omega) = \CC(\Omega)$, as every
$C \in \CC(\Omega)$ is unconfined to an empty set. 
For each component $C \in \support(\Omega)$,
let $T_C$ be the tree-decomposition of $G\langle C \rangle$ of width
$k$ or smaller,
which exists since $C$ is feasible by the
definition of a feasible potential maximal clique.
Let $X_C$ be a bag of $T_C$ such that $N(C) \subseteq X_C$. 
Combine these tree-decompositions
into a tree $T$ by adding bag $\Omega$ and
letting each $X_C$ in $T_C$ be adjacent to $\Omega$.
That $T$ satisfies the first two conditions for
tree decomposition is trivial. The third condition is
also satisfied, since, if a vertex $v$ appears
in $N[C]$ for two or more members $C$ in $\support(\Omega)$,
then $v$ appears in $X_C$ for each such $C$ and in 
$\Omega$. Therefore, $T$ is a tree decomposition of
$G$ of width $k$ or smaller and hence $\tw(G) \leq k$.

For the converse, suppose the treewidth of $G$ is $k$
or smaller.  Let $T$ be a canonical tree-decomposition
of $G$ of width $k$ or smaller: each bag of
$T$ is a potential maximal clique and the intersection
of each pair of adjacent bags of $T$ is a minimal separator. 
Orient each
edge of $T$ as follows. Let $X$ and $Y$ be adjacent
bags in $T$ and let $S = X \cap Y$.
Let $C$ be the outbound full component associated 
with the minimal separator $S$.
Then, $C$ intersects 
exactly one of $X$ and $Y$.  If $C$ intersects
$X$ then we orient the edge between $X$ and $Y$ from
$Y$ to $X$; otherwise from $X$ to $Y$. Since $T$
is a tree, the resulting directed tree has a sink
$X_0$. Then, each component $C$ associated with $X_0$ is inbound and
hence $\outlet(X_0) = \emptyset$.
We show that each such $C$ is moreover feasible.
Indeed, the required tree-decomposition of $G\langle C \rangle$ 
may be obtained from 
$T$ by taking intersection of every bag with $N[C]$:
the resulting tree is a tree-decomposition of $G[N(C)]$ and
contains the bag $X_0 \cap N[C] \supseteq N(C)$.  
The width of the tree-decomposition is not greater than that of $T$
and hence is $k$ or smaller.  Therefore, I-block $(N(C), C)$ for
each component $C$ associated with $X_0$ is 
feasible and hence the potential maximal clique $X_0$ 
is feasible.
\qed 
\end{proof}

\begin{lemma}
\label{lem:PMC_original}
Let $C$ be a connected set of $G$ 
such that $N(C)$ is a minimal separator.
Let $\Omega$ be a potential maximal clique
of $G\langle C \rangle$. Then, $\Omega$ is a potential maximal
clique of $G$. 
\end{lemma}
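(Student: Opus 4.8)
The plan is to verify, through the local characterization in Lemma~\ref{lem:pmc_charact}, that $\Omega$ is a potential maximal clique of $G$: I will show that $\Omega$ is cliquish in $G$, has no full component associated with it in $G$, and is a separator of $G$. Since $\Omega$ is a potential maximal clique of $G\langle C \rangle$, it is already cliquish there and has no full component associated with it there, so the entire task is to transfer these two properties from $G\langle C \rangle$ to $G$. The essential device is the hypothesis that $S := N(C)$ is a \emph{minimal} separator: there is a full component $D^{\circ} \in \CC_G(S)$ other than $C$, with $N_G(D^{\circ}) = S$ and $D^{\circ} \cap N[C] = \emptyset$ (hence $D^{\circ} \cap \Omega = \emptyset$, as $\Omega \subseteq N[C]$). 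This $D^{\circ}$ is adjacent in $G$ to \emph{every} vertex of $S$, so it lets me simulate, inside $G$, the clique edges on $S$ that are present in $G\langle C \rangle$ but possibly absent from $G$.

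First I would set up a dictionary between $\CC_G(\Omega)$ and $\CC_{G\langle C \rangle}(\Omega)$. The graphs $G\langle C \rangle$ and $G[N[C]]$ differ only in the edges completing $S$ into a clique, and all such extra edges lie inside $S$; hence $S \setminus \Omega$ is contained in a single component $B^{*}$ of $G\langle C \rangle \setminus \Omega$, and every other component $B$ of $G\langle C \rangle \setminus \Omega$ avoids $S \setminus \Omega$, uses no completion edge, and is therefore also a component of $G \setminus \Omega$ with the same neighborhood $N_{G\langle C \rangle}(B) = N_G(B)$. On the other side, in $G$ the outside components (those in $\CC_G(S) \setminus \{C\}$, each with neighborhood inside $S$) attach to $S \setminus \Omega$, and because $D^{\circ}$ touches all of $S$, the part $B^{*}$, together with $D^{\circ}$ and these outside components, merges into a single component $Z$ of $G \setminus \Omega$ containing $B^{*}$. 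Separator-ness of $\Omega$ in $G$ follows at once, since any component $B \neq B^{*}$ of $G\langle C \rangle \setminus \Omega$ survives as a component of $G \setminus \Omega$ disjoint from $Z$.

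With this dictionary, cliquishness transfers as follows. Given distinct $u, v \in \Omega$, cliquishness in $G\langle C \rangle$ supplies either an edge $uv$ or a component $B \in \CC_{G\langle C \rangle}(\Omega)$ with $u, v \in N_{G\langle C \rangle}(B)$. If $uv$ is a genuine edge of $G$, or if $B \neq B^{*}$ (so $B$ is also a $G$-component with the same neighborhood), we are immediately done. The remaining possibilities are that $uv$ is only a completion edge (forcing $u, v \in S$) or that $B = B^{*}$; in both cases I would use $Z$. Every vertex of $\Omega$ that lies in $S$ reaches $Z$ through $D^{\circ}$, and every $G\langle C \rangle$-adjacency to $B^{*}$ that is not a completion edge is a genuine $G$-edge into $B^{*} \subseteq Z$; either way $u, v \in N_G(Z)$, so $Z$ witnesses cliquishness in $G$.

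Finally I would rule out a full component. Suppose $Y \in \CC_G(\Omega)$ had $N_G(Y) = \Omega$. By the dictionary, $Y$ is one of the components $B \neq B^{*}$ (then $B$ is a full component of $\Omega$ in $G\langle C \rangle$), or the merged component $Z$, or an isolated outside component $D$ with $N_G(D) \subseteq S$. In the second case the same reachability analysis shows that each $\omega \in \Omega = N_G(Z)$ has a $G$-neighbor in $Z$ forcing $\omega \in N_{G\langle C \rangle}(B^{*})$, so $N_{G\langle C \rangle}(B^{*}) = \Omega$ and $B^{*}$ is full in $G\langle C \rangle$. In the third case $\Omega = N_G(D) \subseteq S$, whence $C \cup (S \setminus \Omega)$ is a connected full component of $\Omega$ in $G\langle C \rangle$ (using $N_G(C) = S \supseteq \Omega$). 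Each case contradicts the fact that $\Omega$, as a potential maximal clique of $G\langle C \rangle$, has no full component there; thus $\Omega$ has no full component in $G$, and Lemma~\ref{lem:pmc_charact} yields that $\Omega$ is a potential maximal clique of $G$. I expect the main obstacle to be exactly the bookkeeping of the dictionary relating $\CC_G(\Omega)$ and $\CC_{G\langle C \rangle}(\Omega)$: one must pin down precisely how the completion of $S$ and the outside components in $\CC_G(S) \setminus \{C\}$ reshape the components, and it is here that minimality of $S$, through the extra full component $D^{\circ}$, is indispensable.
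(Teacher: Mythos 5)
Your route is genuinely different from the paper's: the paper never touches the local characterization, instead gluing together minimal chordal completions $H_D$ of $G\langle D\rangle$ over \emph{all} components $D$ of $G \setminus N(C)$ (choosing $H_C$ so that $\Omega$ is a clique in it), showing that the union $H$ is a minimal chordal completion of $G$, whence $\Omega$ is a clique in a minimal chordal completion of $G$. Your transfer of cliquishness and of the no-full-component property from $G\langle C\rangle$ to $G$ is essentially sound: the component $Z$ containing the second full component $D^{\circ}$ of $S$ does exactly the work you want, and your three-case analysis of a putative full component of $\Omega$ in $G$ goes through (with some bookkeeping slack: $B^{*}$ exists only when $S \setminus \Omega \neq \emptyset$, and outside components with $N_G(D) \subseteq \Omega$ do \emph{not} merge into $Z$, contrary to your setup --- though your ``third case'' silently repairs this).

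The genuine gap is the third item of your plan: the claim that $\Omega$ is a separator of $G$ is false in general, and your argument for it does not close. You conclude separator-ness from the survival of ``any component $B \neq B^{*}$ of $G\langle C\rangle \setminus \Omega$,'' but nothing guarantees such a $B$ exists. Concretely, let $G$ be the path $a$--$b$--$c$ and $C = \{a\}$, so $S = \{b\}$ is a minimal separator; then $G\langle C\rangle$ is the single edge $ab$, and $\Omega = \{a,b\}$ is a potential maximal clique of $G\langle C\rangle$ and also of $G$, yet $G \setminus \Omega = \{c\}$ is connected, so $\Omega$ is \emph{not} a separator of $G$. This matters because Lemma~\ref{lem:pmc_charact}, as stated in the paper, characterizes potential maximal cliques only among \emph{separators}; in the degenerate case where $G \setminus \Omega$ is connected (no inner component $B \neq B^{*}$ and no detached outside component) your argument yields cliquishness and absence of full components but cannot invoke the lemma to conclude. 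The repair is either to use the unrestricted form of Theorem 3.15 of Bouchitt\'{e} and Todinca \cite{BT01}, which holds for arbitrary vertex sets with no separator hypothesis, or to dispose of that degenerate case by a separate argument --- for instance the paper's chordal-completion construction, which needs no case analysis at all.
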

\begin{proof}
For each component $D$ associated with $N(C)$, let 
$H_D$ be a minimal chordal completion of $G\langle C \rangle$.
In particular, choose $H_C$ so that $\Omega$ is a clique in
$H_C$. Let $H$ be the union of these graphs: 
$V(H) = V(G)$ and 
$E(H) = \bigcup_{D \in \CC(N(C))} E(H_D)$.  It is clear that
$H$ is chordal.  Let $H'$ be a minimal chordal completion
of $G$ contained in $H$. It is well-known that every
minimal separator is a clique in every chordal completion and
hence $N(C)$ is a clique in $H'$. Therefore, the minimality of
$H_D$ for each $D$ implies that $H' = H$. As $\Omega$ is a clique
in $H_C$, it is a clique in $H$ and hence is a potential
maximal clique of $G$.
\qed 
\end{proof}

The following is our oriented version of the recurrence
in the BT algorithm described in the beginning of this section. 
\begin{lemma}
\label{lem:I-block-feasible}
An I-block $(N(C), C)$
is feasible if and only if
there is some feasible potential maximal clique $\Omega$ with
$\outlet(\Omega) = N(C)$ and 
$\bigcup_{D \in \support(\Omega)} D = C$.
\end{lemma}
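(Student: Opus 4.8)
The plan is to prove this as the oriented refinement of the Bouchitté--Todinca recurrence reviewed at the start of this section, reading the second condition as the identification $C = \crib(\outlet(\Omega), \Omega)$; recall that $\crib(\outlet(\Omega),\Omega) = (\Omega \setminus \outlet(\Omega)) \cup \bigcup_{D \in \support(\Omega)} D$, so that the union of the support components is precisely the ``interior'' part of this crib. I would treat the two directions separately, using Lemma~\ref{lem:crib} (to know the crib is a full component), Lemma~\ref{lem:PMC_original} (to move between potential maximal cliques of $G\langle C \rangle$ and of $G$), the gluing construction from the proof of Lemma~\ref{lem:PMC_feasible}, and Lemma~\ref{lem:pmc_outbounds} (to control outbound neighborhoods via their nestedness).

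For the direction ($\Leftarrow$), suppose $\Omega$ is a feasible potential maximal clique with $\outlet(\Omega) = N(C)$ and $C = \crib(\outlet(\Omega),\Omega)$. Since $\outlet(\Omega) = N(C) \neq \emptyset$, it equals $N(A^*)$ for an outbound non-full component $A^*$ of $\Omega$; as $N(A^*) = N(C)$, the set $A^*$ is a full component of $N(C)$ in $G$ with smallest $\min$-element, whence $A^* \prec C$ and $C$ is inbound, and $|N(C)| = |\outlet(\Omega)| \leq |\Omega| - 1 \leq k$, so $(N(C), C)$ is indeed an I-block. Because $C = \crib(\outlet(\Omega),\Omega) \supseteq \Omega \setminus N(C)$ we get $\Omega \subseteq N[C]$, and each $D \in \support(\Omega) = \unconf(N(C),\Omega)$ satisfies $D \subseteq C$ and $N[D] \subseteq N[C]$. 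Feasibility of $\Omega$ supplies, for each such $D$, a width-$\leq k$ tree-decomposition of $G\langle D \rangle$ with a bag containing $N(D)$; I would glue all of these to one new bag $\Omega$ along the shared sets $N(D) \subseteq \Omega$, exactly as in the forward construction of Lemma~\ref{lem:PMC_feasible}. As $N[C] = \Omega \cup \bigcup_{D \in \support(\Omega)} D$ and $N(C) \subseteq \Omega$, this yields a width-$\leq k$ tree-decomposition of $G[N[C]]$ with a bag ($\Omega$) containing $N(C)$, so $\tw(G\langle C \rangle) \leq k$ and the I-block is feasible.

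For ($\Rightarrow$), suppose the I-block $(N(C), C)$ is feasible, so $\tw(G\langle C \rangle) \leq k$. Take a canonical tree-decomposition $T$ of $G\langle C \rangle$ of width $\leq k$ (Lemma~\ref{lem:pmc_decompose}); by Lemma~\ref{lem:PMC_original} every bag is a potential maximal clique of $G$. Since $N(C)$ is a clique in $G\langle C \rangle$, some bag contains $N(C)$. I would orient the edges of $T$ as in the converse part of Lemma~\ref{lem:PMC_feasible} and let $\Omega$ be a resulting sink, chosen with $N(C) \subseteq \Omega$. The external full component $A^*$ of $N(C)$ (the outbound one, lying in $V(G) \setminus N[C]$) is then a non-full component of $\Omega$ with $N(A^*) = N(C)$. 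I would show that every other non-full component of $\Omega$ lies inside $C$ and is inbound, so that by Lemma~\ref{lem:pmc_outbounds} the component $A^*$ realizes the maximal outbound neighborhood and $\outlet(\Omega) = N(C)$. Since $\Omega \subseteq N[C]$, the set $\crib(\outlet(\Omega),\Omega)$ is the full component of $N(C)$ meeting $C$, namely $C$ itself. Finally each $D \in \support(\Omega)$ lies in $C$, and restricting $T$ to $N[D]$ (intersecting every bag with $N[D]$) gives a width-$\leq k$ tree-decomposition of $G\langle D \rangle$ with a bag containing $N(D)$, so $D$ is feasible and hence $\Omega$ is feasible.

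The hard part will be the outlet identification in the forward direction: proving that $N(C) \subseteq \Omega$ and that the only outbound non-full component of the sink $\Omega$ is the external component $A^*$ with $N(A^*) = N(C)$, so that no outbound component has a strictly larger neighborhood. The difficulty is that $T$ covers only $N[C]$ while outbound-ness is a global property of $G$ (the external region being collapsed in $G\langle C \rangle$); reconciling the orientation of $T$ with full components of $G$ and ruling out outbound components strictly inside $C$ --- exploiting that $N(C)$ is a clique in $G\langle C \rangle$, so no separation of $G\langle C \rangle$ ``escapes'' past $N(C)$, together with the nestedness from Lemma~\ref{lem:pmc_outbounds} --- is where the care is needed.
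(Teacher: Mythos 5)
Your plan follows the paper's own route in both directions (gluing the decompositions of $G\langle D \rangle$, $D \in \support(\Omega)$, onto the new bag $\Omega$ for sufficiency; a canonical tree-decomposition of $G\langle C \rangle$, oriented by outbound components \emph{of $G$}, with a sink extracted, plus Lemma~\ref{lem:PMC_original} and bag-restriction, for necessity), and your reading of the second condition as $C = \crib(\outlet(\Omega), \Omega)$ matches how the paper itself invokes the lemma. The sufficiency half of your proposal is complete. The necessity half, however, has a genuine gap at exactly the point you defer: you never establish $\outlet(\Omega) = N(C)$, and the two preparatory choices you make would not survive an attempt to do so. First, you take ``a resulting sink, chosen with $N(C) \subseteq \Omega$''; a sink of the globally oriented tree containing $N(C)$ is not guaranteed to exist (you implicitly concede this by listing ``proving that $N(C) \subseteq \Omega$'' among the remaining difficulties). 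The paper instead takes $X_0$ to be a sink of the orientation \emph{restricted to the subtree of bags containing $N(C)$} (nonempty since $N(C)$ is a clique in $G\langle C \rangle$), so the containment holds by construction. Second, the claim you propose to prove --- that the external component $A^*$ is the \emph{only} outbound non-full component of $\Omega$, all others lying inside $C$ and being inbound --- is stronger than what is true: every component of $G - N(C)$ other than $C$ and $A^*$ is also a component associated with $\Omega$ lying outside $N[C]$, and such a component can well be outbound (with neighborhood properly contained in $N(C)$). By Lemma~\ref{lem:pmc_outbounds} and the maximality in the definition of the outlet, what is actually needed is only that no outbound component associated with $\Omega$ has neighborhood \emph{properly containing} $N(C)$.

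That weaker statement is the crux, and nothing in your proposal substitutes for the paper's argument for it. Suppose $A'$ is an outbound component associated with $X_0$ with $N(C) \subsetneq N(A')$. Then $A'$ is unconfined to $N(C)$, so $A' \subseteq \crib(N(C), X_0) = C$; hence $A'$ meets bags of $T$, and since $A' \cap X_0 = \emptyset$, all bags meeting $A'$ lie in the branch of $T$ at a single neighbor $X$ of $X_0$. Every vertex of $N(A')$ lies in $X_0$ and also in a bag of that branch (it has a neighbor in $A'$), so $N(A') \subseteq X_0 \cap X$; thus $N(C) \subsetneq N(A') \subseteq X$, i.e.\ $X$ is itself a bag containing $N(C)$, and since the outbound $A'$ meets $X$, the edge between $X_0$ and $X$ is oriented from $X_0$ to $X$. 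This contradicts the choice of $X_0$ as a sink of the subtree of bags containing $N(C)$. This step is precisely where the tension you identified --- $T$ covers only $N[C]$ while outbound-ness is global --- gets resolved, and without it (or a replacement) the necessity direction is not proved.
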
 
\begin{proof}
Suppose first that there is a feasible potential maximal clique 
$\Omega$ as in the lemma. For each component $D \in \support(\Omega)$,
let $T_D$ be a tree-decomposition of $G\langle D \rangle$ 
of width $k$ or smaller and $X_D$ be a bag in $T_D$
containing $N(D)$.  Combine
these tree-decompositions $T_D$, $D \in \support(\Omega)$,
into a tree $T$ by adding bag $\Omega$ and let it be adjacent to
$X_D$ for each $D \in \support(\Omega)$. 
We confirm that $T$ is a tree-decomposition of
$G[N[C]]$. Every vertex $v \in N[C]$ appears in some bag of $T$ since
$C$ is the union of $D$ for all $D \in \support(\Omega)$ and
the bag $\Omega$ contains $N(C)$. Every edge of $G[N[C]]$ appears
in some bag of $T$ for the same reason. 
The third condition for  
$T$ being a tree decomposition is also satisfied, 
since, if a vertex $v$ appears
in $N[D]$ for two or more members $D$ in $\support(\Omega)$,
then $v$ appears in $X_D$ for each such $D$ and in 
$\Omega$. Therefore, $T$ is a tree decomposition of
$G[N[C]]$ of width $k$ or smaller and hence 
the bag $\Omega$ in $T$ contains $N(C)$, 
$T$ attests the feasibility of the I-block $(N(C), C)$. 

For the converse, suppose that I-block $(N(C), C)$ is feasible.
Let $T$ be a canonical tree-decomposition of $G \langle C \rangle$ of width $k$
or smaller. 
Orient the edges of $T$ as in the proof of
Lemma~\ref{lem:PMC_feasible}: orient the edge
from $X$ to $Y$ if and only if $Y$ intersects
the outbound full component associated with $X \cap Y$.
We need to stress here that the notion of outbound
components used in this orientation is with respect
to the entire graph $G$ and not with
respect to $G\langle C \rangle$, the graph of which $T$ is a
tree-decomposition. 
As $N(C)$ is a clique in $G \langle C \rangle$, $T$ contains
a bag that contains $N(C)$. In the subtree of $T$
induced by those bags containing $N(C)$, let $X_0$ be a sink with respect to
the above orientation.  
As $T$ is canonical, $X_0$ is a potential maximal
clique of $G \langle C \rangle$ and hence of $G$ by
Lemma~\ref{lem:PMC_original}. We show below that
$X_0$ is feasible.

Let $A$ be the outbound full component associated
with $N(C)$. As $N(C) \subseteq X_0$ and $A \cap N[C] = \emptyset$,
$A$ is a component associated with $X_0$.
We claim that $N(C) = \outlet(X_0)$.  Suppose otherwise
that there is some outbound component $A'$ associated
with $X_0$ such that $N(C)$ is a proper subset of
$N(A')$.  Then, as $A'$ is not confined to $N(C)$,
$C = \crib(N(C), X_0)$ contains $A'$.
Therefore, there is some bag $X$ adjacent to
$X_0$ in $T$ such that $X \cap A' \neq \emptyset$.
Since $N(C)$ is a minimal separator that
separates $A$ from $A'$, $X$ must contain $N(C)$.
But, since $A'$ is an outbound component associated with
$X_0$, the edge between
$X_0$ and $X$ is oriented from $X_0$ to $X$.  This contradicts the choice
of $X_0$ and we conclude that $N(C) = \outlet(X_0)$.

It remains to verify that each 
$D \in \support(X_0)$ is feasible.
This is true since the tree of bags obtained from $T$
by intersecting each bag with $N[D]$ is a tree-decomposition
of $G \langle D \rangle$ required for the feasibility
of $D$.
\qed 
\end{proof}

\begin{lemma}
\label{lem:support-outbound}
Let $K$ be a cliquish vertex set, 
$\CC$ a non-empty subset of $\support(K)$, and 
$S = \bigcup_{C \in \CC} N(C)$.
If $S$ is a proper subset of $K$ then 
$\crib(S, K)$ is outbound.
\end{lemma}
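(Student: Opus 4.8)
The plan is to reformulate outboundness of $A := \crib(S, K)$ as a statement about which full component of $S$ carries the smallest vertex. By Lemma~\ref{lem:crib}, $A$ is a full component of $S$ (here I use that $K$ is cliquish and that $S$ is a proper subset of $K$), so $N(A) = S$. By definition, $A$ is outbound precisely when no full block associated with $N(A) = S$ precedes it; since distinct full components of $S$ are vertex-disjoint, their minimum elements under $<$ are distinct, and hence $A$ is outbound if and only if $A$ is the full component of $S$ whose minimum element is smallest. So the whole lemma reduces to showing that $A$ is this ``first'' full component of $S$.

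First I would pin down what the other full components of $S$ look like. Because $A \supseteq K \setminus S$ and $S \subsetneq K$, the set $A$ meets $K$, and, as noted in the remark following Lemma~\ref{lem:crib}, $A$ is the only component associated with $S$ that intersects $K$. Hence every full component $D \neq A$ of $S$ is disjoint from $K$. A short argument then upgrades this to $D \in \CC(K)$: since $D$ avoids $K \supseteq S$ and is connected, it lies inside a single component of $G[V(G) \setminus K]$, and maximality of $D$ as a component of $G[V(G) \setminus S]$ forces equality. Thus each such $D$ is a component associated with $K$ with $N(D) = S$ (as $D$ is full for $S$).

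The punchline then combines $\support(K)$ with Lemma~\ref{lem:pmc_outbounds}. Fix any $C \in \CC$ (the set is non-empty); since $C \in \support(K) = \unconf(\outlet(K), K)$ we have $N(C) \not\subseteq \outlet(K)$, and as $N(C) \subseteq S$ this yields $S \not\subseteq \outlet(K)$. Now let $D_0$ be the full component of $S$ whose minimum element is smallest; by definition $D_0$ is outbound. If $D_0 = A$ we are done. Otherwise $D_0 \in \CC(K)$ with $N(D_0) = S \not\subseteq \outlet(K)$, so $D_0$ is unconfined to $\outlet(K)$ and therefore $D_0 \in \support(K)$; but then Lemma~\ref{lem:pmc_outbounds} (every member of $\support(K)$ is inbound) forces $D_0$ to be inbound, contradicting its outboundness. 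Hence $D_0 = A$, i.e., $A$ is outbound.

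I expect the main obstacle to be the bookkeeping of the middle step: verifying that a full component of $S$ other than $\crib(S, K)$ is genuinely a single member of $\CC(K)$, rather than a union of several components of $G[V(G) \setminus K]$ glued together through vertices of $K \setminus S$. This is exactly where the dichotomy between components confined to $S$ (which survive unchanged as components of $G[V(G)\setminus S]$) and those unconfined to $S$ (which get absorbed into $A$) matters, and where the hypothesis $S \subsetneq K$ is used twice: once to invoke Lemma~\ref{lem:crib}, and once to guarantee $K \setminus S \neq \emptyset$ so that $A$ actually meets $K$. I would also note explicitly the degenerate case in which $A$ is the only full component of $S$ (so $S$ is not even a minimal separator), since there $A$ is outbound trivially and the contradiction step is vacuous.
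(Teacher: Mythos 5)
Your proposal is correct and takes essentially the same route as the paper's proof: both rest on Lemma~\ref{lem:crib}, the observation that any full component of $S$ other than $\crib(S,K)$ is a component associated with $K$, and the derivation $S \not\subseteq \outlet(K)$ from picking any $C \in \CC \subseteq \support(K)$, so that an outbound such component would contradict the outlet/support machinery. The only cosmetic difference is that you make explicit the reformulation of outboundness via the smallest-minimum full component and contradict on that single component $D_0$, whereas the paper directly argues that every full component of $S$ other than $\crib(S,K)$ is inbound (using the definition of $\outlet(K)$ rather than the packaged fact that members of $\support(K)$ are inbound) --- the same argument, slightly reorganized.
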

\begin{proof}
Let $K$, $\CC$ and $S$ be as in the lemma.
Since $K$ is cliquish, $\crib(S, K)$ is a full component
associated with $S$ that contains $K \setminus S$, 
by Lemma~\ref{lem:crib}. 
To show that it is outbound, it suffices to show that
no other full component associated with $S$ is 
outbound. Let $A$ be an arbitrary 
full component associated with $S$ that is
distinct from $\crib(S, K)$. 
As $A$ does not intersect $K$, 
it is a component associated with $K$.
Let $C$ be an arbitrary member of $\CC$.
Then, $C$ is confined to $S$ by the
definition of $S$.  On the other hand
$C$ is not confined to $\outlet(K)$ since
$C \in \support(K)$.  Therefore,
$S$ is not a subset of $\outlet(K)$.
$A$ cannot be outbound, since it would
imply that $S = N(A) \subseteq \outlet(K)$.
Therefore, $A$ is inbound and, since this holds
for every full component associated with $S$
other than $\crib(S, K)$, $\crib(S, K)$ is outbound.
\qed 
\end{proof}

The following lemma is crucial for our PID result:
the algorithm described in the next section
generates all buildable potential
maximal cliques and we need to guarantee
all feasible maximal cliques to be among them. 

\begin{lemma}
\label{lem:buildable_feasible}
Let $\Omega$ be a feasible potential maximal clique. 
Then, $\Omega$ is buildable.
\end{lemma}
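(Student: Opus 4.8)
The plan is to show that a feasible potential maximal clique $\Omega$ meets one of the three defining conditions of buildability, via a case analysis organized around the set $S = \bigcup_{C \in \support(\Omega)} N(C) \subseteq \Omega$. Throughout I would use that $\Omega$ is cliquish and has no full component (Lemma~\ref{lem:pmc_charact}), and that the members of $\support(\Omega)$ are feasible (since $\Omega$ is feasible) and inbound.

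First I would dispose of condition~1. If some $v \in \Omega$ has no neighbor outside $\Omega$, then for every other $u \in \Omega$ cliquishness forces the edge $uv$ (a common component $C \in \CC(\Omega)$ would put $v \in N(C)$, a neighbor of $v$ outside $\Omega$), so $\Omega \subseteq N[v]$; as $N[v] \subseteq \Omega$ as well, $\Omega = N[v]$ and condition~1 holds. Hence I may assume every vertex of $\Omega$ has a neighbor outside $\Omega$, which gives $\Omega = \bigcup_{C \in \CC(\Omega)} N(C)$. If moreover $S = \Omega$, then $\Omega$ is the union of the neighborhoods of the feasible members of $\support(\Omega)$, and condition~2 holds.

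The substantive case is $S \subsetneq \Omega$. Here I would first record, from the definitions of $\outlet$ and $\support$, that every $u \in \Omega \setminus \outlet(\Omega)$ lies in $N(C)$ for some component $C$ with $N(C) \not\subseteq \outlet(\Omega)$, hence $C \in \support(\Omega)$ and $u \in S$; thus $\Omega \setminus \outlet(\Omega) \subseteq S$ and consequently $\Omega \setminus S \subseteq \outlet(\Omega)$. In this case $\outlet(\Omega) \neq \emptyset$ (otherwise every component is unconfined and $S = \Omega$) and $\support(\Omega) \neq \emptyset$ (otherwise $\Omega = \bigcup_{C} N(C) \subseteq \outlet(\Omega)$ gives $\Omega = \outlet(\Omega)$, a full component, impossible for a potential maximal clique). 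Now set $A = \crib(S, \Omega)$: by Lemma~\ref{lem:crib} it is a full component associated with $S$ (so $N(A) = S$), by Lemma~\ref{lem:support-outbound} it is outbound, and since $S \subsetneq \Omega$ with $|\Omega| \le k+1$ gives $|S| \le k$ while $S = \bigcup_{C \in \support(\Omega)} N(C)$ is a union of neighborhoods of feasible inbound components, $(N(A), A)$ is a feasible O-block.

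It remains to produce a vertex $v \in S$ with $N(v) \cap A = \Omega \setminus S$, so that $\Omega = N(A) \cup (N(v) \cap A)$; this is the step I expect to be the main obstacle, and the key idea is to choose $v \in \Omega \setminus \outlet(\Omega)$, a set that is nonempty because $\outlet(\Omega) \subsetneq \Omega$ (it is the neighborhood of a non-full component) and that is contained in $S$ by the containment above. For $\Omega \setminus S \subseteq N(v)$: given $w \in \Omega \setminus S \subseteq \outlet(\Omega)$, cliquishness on the pair $w,v$ gives either the edge $wv$ or a component $C$ with $w,v \in N(C)$; in the latter case $v \notin \outlet(\Omega)$ makes $C$ unconfined to $\outlet(\Omega)$, so $C \in \support(\Omega)$ and $N(C) \subseteq S$, contradicting $w \notin S$ — hence $w$ and $v$ are adjacent. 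For the reverse inclusion, note $A \setminus (\Omega \setminus S) = \bigcup_{C \in \unconf(S,\Omega)} C$ by the definition of $\crib$; any such $C$ with $v \in N(C)$ would again, since $v \notin \outlet(\Omega)$, satisfy $C \in \support(\Omega)$ and $N(C) \subseteq S$, contradicting $N(C) \not\subseteq S$, so $v$ has no neighbor in $A$ outside $\Omega \setminus S$. Combining the two inclusions yields $N(v) \cap A = \Omega \setminus S$ with $v \in N(A) = S$, which is exactly condition~3. The only points needing care are the nonemptiness bookkeeping for $\support(\Omega)$ and $\outlet(\Omega)$ and the identification of $A \setminus (\Omega \setminus S)$ with the union of the $S$-unconfined components, both immediate from the definitions.
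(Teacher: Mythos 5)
Your proof is correct and takes essentially the same route as the paper's: the same three-way case analysis keyed to $S = \bigcup_{C \in \support(\Omega)} N(C)$ and the three buildability conditions, the same use of Lemmas~\ref{lem:crib} and~\ref{lem:support-outbound} to obtain the feasible O-block $(S, \crib(S,\Omega))$, and the same choice of $v \in \Omega \setminus \outlet(\Omega)$ with the cliquishness argument in the main case. The only differences are cosmetic: your first case (``some vertex of $\Omega$ has no neighbor outside $\Omega$'') is equivalent to the paper's case $S \cup \outlet(\Omega) \neq \Omega$, you dispatch the case $S = \Omega$ directly with $\CC = \support(\Omega)$ instead of the paper's (superfluous) detour through a minimal subfamily, and you spell out the nonemptiness bookkeeping that the paper leaves implicit.
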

\begin{proof}
Let $S = \bigcup_{C \in \support(\Omega)} N(C)$.

Suppose first that $S \cup \outlet(\Omega) \neq \Omega$ 
and let $v$ be
an arbitrary member of $\Omega \setminus (S \cup \outlet(\Omega))$. 
Since $\Omega$ is cliquish and $v$ is not in $N(C)$
for any component $C$ associated with $\Omega$, 
$v$ is adjacent to every other vertex in $\Omega$.
Therefore, $\Omega \subseteq N[v]$.
Let $C$ be an arbitrary component associated with $\Omega$.
If $C$ is confined to $\outlet(\Omega)$ then $v \not\in N(C)$
since $v \not\in \outlet(\Omega)$. Otherwise, $C \in \support(\Omega)$
and hence $v \not\in N(C)$ as $v \not\in S$.
Therefore, $N(v) \setminus \Omega$ is empty and hence
we have $\Omega = N[v]$.  Thus, $\Omega$ is buildable, the
first case of buildability.

Suppose next that $S \cup \outlet(\Omega) = \Omega$. 
We have two cases to 
consider: $S = \Omega$ and $S \neq \Omega$.

Consider the case where $S = \Omega$.
Let $\CC_0$ be an arbitrary minimal subset of $\support(\Omega)$ such that
$\bigcup_{C \in \CC_0} N(C) = \Omega$.  
Since $\Omega$ does not have
a full component associated with it, $\CC_0$ 
has at least two members.  Let $C_0$ be an 
arbitrary member of $\CC_0$ and let 
$\CC_1 = \CC_0 \setminus \{C_0\}$. 
From the minimality
of $\CC_0$, $S_1 = \bigcup_{C \in \CC_1} N(C)$
is a proper subset of $\Omega$.
By Lemmas~\ref{lem:crib} and \ref{lem:support-outbound}, 
$A_1 = \crib(S_1, \Omega)$ is a full component associated with $S_1$ and
is outbound. 
Therefore, $(S_1, A_1)$ is an O-block and is feasible since
every member of $\CC_1 \subseteq \support(\Omega)$ is feasible
as potential maximal clique $\Omega$ is feasible.
Thus, the second case in the definition of feasible potential maximal
cliques applies.

Finally, suppose that $S \neq \Omega$. 
Let $A = \crib(S, \Omega)$.  Then, $A$ is a full component
associated with $S$ and is outbound, by
Lemmas~\ref{lem:crib} and \ref{lem:support-outbound}.
Since $S = \bigcup_{C \in \support(\Omega)} N(C)$ and
$\Omega$ is feasible, the O-block $(S, A)$ is feasible.  
Let $x$ be an arbitrary vertex in $\Omega \setminus S$.
Since we are assuming that $S \cup \outlet(\Omega) = \Omega$ 
we have $x \in \outlet(\Omega) \setminus S$.
Let $v$ be an arbitrary vertex in $\Omega \setminus \outlet(\Omega)$.
Observe that there is no component $C$ associated with $\Omega$ such
that $N(C)$ contains both $x$ and $v$: $x \not\in N(C)$ for every $C \in \support(\Omega)$ and $v \not\in N(C)$
for every $C$ that is confined to $\outlet(\Omega)$.
Since $\Omega$ is cliquish, it follows that
$x$ and $v$ are adjacent to each other. 
Therefore, we have $\Omega \setminus S
\subseteq N(v)$.
Moreover, $A$ contains $\Omega \setminus S$ by Lemma~\ref{lem:crib}.
Finally, $A \setminus \Omega$ is disjoint from 
$N(v)$, since every component $D$ associated with $\Omega$ 
such that $v \in N(D)$ is not confined to $\outlet(\Omega)$ and
hence contained in $C$. 
Therefore, we have $\Omega = S \cup (N(v) \cap A)$, and the
third case in the definition of buildable potential maximal cliques
applies.
\qed 
\end{proof}

\section{Algorithm} \label{sec:algorithm}
Given graph $G$ and positive integer $k$,
our algorithm generates all I-blocks,
O-blocks, and potential maximal cliques that are feasible.
In the algorithm description below, the
following variables, with suffixes, are used:
$\II$ for listing feasible
I-blocks, $\OO$ for feasible O-blocks,
$\PP$ for buildable potential maximal cliques, and
$\Ss$ for feasible potential maximal cliques.
We note that each member of $\II$ and $\OO$ is actually
the component part of an I- or O-block.

\bigskip
\noindent\textbf{Algorithm PID-BT}

\smallskip
\noindent\textbf{Input:}
Graph $G$ and positive integer $k$
\newline\noindent\textbf{Output:}
``YES'' if $\tw(G) \leq k$; ``NO'' otherwise
\newline\noindent\textbf{Procedure:}
\begin{enumerate}
  \item Let $\II_0 = \emptyset$ and $\OO_0 = \emptyset$.
  \item Initialize $\PP_0$ and $\Ss_0$ to $\emptyset$.
  \item Set $j = 0$.
  \item For each $v \in V(G)$, if $N[v]$ is a potential maximal clique
  with $|N[v]| \leq k + 1$ then add $N[v]$ to $\PP_0$ and 
  if, moreover, $\support(N[v]) = \emptyset$ then do the following.
  \begin{enumerate}
    \item Add $N[v]$ to $\Ss_0$. 
    \item If $\outlet(N[v]) \neq \emptyset$ then let $C = \crib(\outlet(N[v]),
    N[v])$ and, provided that $C \neq C_h$ for $1 \leq h \leq j$, 
    increment $j$ and let $C_j = C$.
  \end{enumerate}
  \item Set $i = 0$.
  \item Repeat the following and stop repetition when $j$ is not incremented
  during the iteration step.
  \begin{enumerate}
    \item While $i < j$, do the following.
    \begin{enumerate}
      \item Increment $i$ and let $\II_i$ be $\II_{i - 1} \cup
      \{C_i\}$.
      \item Initialize $\OO_i$ to $\OO_{i - 1}$, $\PP_i$ to $\PP_{i - 1}$, and
      $\Ss_i$ to $\Ss_{i - 1}$.  
      \item For each $B \in \OO_{i - 1}$ such that
      $C_i \subseteq B$ and $|N(C_i) \cup N(B)| \leq k + 1$, 
      let $K = N(C_i) \cup N(B)$ and do the following.
      \begin{enumerate}
        \item If $K$ is a potential maximal clique, 
      	    then add $K$ to $\PP_i$. 
      	\item If $|K| \leq k$ and there is
             a full component $A$ associated with $K$ (which is unique), 
    	     then add $A$ to $\OO_i$.
      \end{enumerate}
      \item Let $A$ be the full component associated with $N(C_i)$ and
      add $A$ to $\OO_i$.
      \item For each $A \in \OO_i \setminus \OO_{i - 1}$ and $v \in N(A)$, let 
      $K = N(A) \cup (n(v) \cap A)$ and if $|K| \leq k + 1$ and
      $K$ is a potential maximal clique then add $K$ to $\PP_i$.
      \item For each $K \in \PP_i \setminus \Ss_{i - 1}$, 
       if $\support(K) \subseteq \II_i$ then add $K$ to $\Ss_i$ and do the
       following:
       if $\outlet(K) \neq \emptyset$ then let 
    	    $C = \crib(\outlet(K), K)$ and, provided that $C \neq C_h$ for $1 \leq
    	    h \leq j$, increment $j$ and let $C_j = C$.
    \end{enumerate}
  \end{enumerate}
  \item If there is some $K \in \Ss_j$ such that
  $\outlet(K) = \emptyset$, then answer ``YES'';
  otherwise, answer ``NO''.
\end{enumerate}
\begin{theorem}
\label{thm:correctness}
Algorithm PID-BT, given $G$ and $k$, answers
``YES'' if and only if $\tw(G) \leq k$.
\end{theorem}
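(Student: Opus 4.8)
The plan is to prove the theorem by establishing two invariants maintained by the algorithm — \emph{soundness} and \emph{completeness} of the four collections it builds — and then reading off the final equivalence from Lemma~\ref{lem:PMC_feasible}. Concretely, I would show that at every point in the execution (i) every component placed in $\II_i$ is the component of a feasible I-block, every component in $\OO_i$ the component of a feasible O-block, every set in $\PP_i$ a buildable potential maximal clique, and every set in $\Ss_i$ a feasible potential maximal clique (soundness); and (ii) upon termination each feasible I-block, each feasible O-block, each buildable potential maximal clique, and in particular (via Lemma~\ref{lem:buildable_feasible}) each feasible potential maximal clique has been generated into the corresponding collection (completeness). Granting both, Step~7 answers ``YES'' exactly when some member of $\Ss_j$ has empty outlet, which by Lemma~\ref{lem:PMC_feasible} happens exactly when $\tw(G) \le k$.

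For soundness I would induct on the sequence of additions, checking that each addition literally instantiates the clause of a definition or recurrence it is meant to realize. Step~4 realizes case~(1) of buildability ($\Omega = N[v]$) and the $\outlet(\Omega) = \emptyset$ case of Lemma~\ref{lem:PMC_feasible}. In Step~6(c), combining a feasible I-block $C_i \in \II_i$ with a feasible O-block $B \in \OO_{i-1}$ to form $K = N(C_i)\cup N(B)$: since $N(B)$ is already a union of neighborhoods of feasible inbound components, so is $K$, hence when $K$ is a potential maximal clique it falls under case~(2) of buildability, and when $K$ admits a full component $A$ the block $(K, A)$ is a feasible O-block by the definition of O-block feasibility together with Lemmas~\ref{lem:crib} and~\ref{lem:support-outbound}. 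Step~6(d) produces the base O-block $(N(C_i), A)$ from the single feasible I-block $C_i$, and Step~6(e) realizes case~(3) of buildability. Finally Step~6(f) promotes a buildable $K$ to $\Ss_i$ precisely when $\support(K) \subseteq \II_i$, matching the definition of a feasible potential maximal clique; the crib $C = \crib(\outlet(K), K)$ it then registers is, by Lemma~\ref{lem:crib}, a full component of $\outlet(K)$, and is a feasible I-block by Lemma~\ref{lem:I-block-feasible} applied to the feasible clique $K$.

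For completeness I would argue by well-founded induction on $|C|$. A feasible I-block $(N(C), C)$ arises, by Lemma~\ref{lem:I-block-feasible}, from a feasible potential maximal clique $\Omega$ with $\outlet(\Omega) = N(C)$ whose support components are feasible I-blocks strictly contained in $C$; by the induction hypothesis these smaller components are eventually listed in $\II$, and by Lemma~\ref{lem:buildable_feasible} the clique $\Omega$ is buildable. It then remains to show every buildable clique is actually generated, which reduces to showing every feasible O-block is generated: a feasible O-block $(N(A), A)$ with $N(A) = \bigcup_{C \in \CC} N(C)$ for feasible inbound $\CC$ is built by adjoining the members of $\CC$ one at a time — the singleton case by Step~6(d) and the inductive step by Step~6(c) — so I would induct on $|\CC|$, verifying at each stage that the partial union again determines a genuine feasible O-block (again via Lemmas~\ref{lem:crib} and~\ref{lem:support-outbound}) and that both ingredients of each binary combination are present before it is needed. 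Once all O-blocks are available, cases~(2) and~(3) of buildability are realized by Steps~6(c) and~6(e), Step~6(f) then promotes $\Omega$ and registers $C$ as a new $C_j$, and termination follows since there are finitely many distinct full blocks, so $j$ can be incremented only finitely often.

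The main obstacle is the completeness direction, and within it the reconciliation of the algorithm's strictly \emph{binary} combination rule with the unbounded-arity definitions of feasible O-block and buildable clique. I must choose a well-founded measure (component size for I-blocks, the cardinality of a realizing family $\CC$ for O-blocks) under which the ``peel off one $C_i$'' step always yields a strictly smaller, already-generated object that is itself a \emph{valid} feasible O-block, and confirm that the intermediate separators $N(C_i) \cup N(B)$ never violate the $|K| \le k+1$ budget in a way that would skip a needed clique. Verifying that each peeled family still induces a single full outbound component — so that Lemmas~\ref{lem:crib} and~\ref{lem:support-outbound} continue to apply at every intermediate stage — is the delicate combinatorial heart of the argument; the remaining bookkeeping (the ordering of $i$ and $j$, ensuring ingredients precede the combinations that use them) is routine once this is in place.
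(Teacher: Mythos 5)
Your proposal follows the paper's proof architecture essentially step for step: the same addition-by-addition soundness invariants for $\II_i$, $\OO_i$, $\PP_i$, $\Ss_i$, a completeness induction that dismantles realizing families to match the algorithm's binary combinations, and the final appeal to Lemma~\ref{lem:PMC_feasible}; the lemmas you cite (Lemmas~\ref{lem:crib}, \ref{lem:support-outbound}, \ref{lem:I-block-feasible}, \ref{lem:buildable_feasible}) are exactly the ones the paper's proof uses, and your soundness half is in substance the paper's.

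The gap is the step you yourself defer, and it is a genuine one, because the naive execution of your peeling plan fails. Two devices, both present in the paper's proof, are required. First, you cannot peel an arbitrary member of $\CC$, nor adjoin members in an arbitrary order: step 6-(a)-iii combines only the \emph{newly created} I-block $C_i$ with O-blocks already in $\OO_{i-1}$, never an older I-block with a newer O-block, so the member peeled off must be the one with the \emph{largest} index $i_C$ (the paper: ``Choose $C \in \CC$ so that $i_C$ is the largest''); only then is the partial O-block guaranteed to already be stored when the last ingredient arrives. Second, $\CC$ must first be replaced by a \emph{minimal} subfamily realizing the union. Minimality is what makes Lemmas~\ref{lem:crib} and~\ref{lem:support-outbound} do the work you want: it forces the peeled union $S$ to be a \emph{proper} subset of the target (so $\crib(S,\cdot)$ is defined and, by Lemma~\ref{lem:support-outbound}, outbound, i.e.\ the intermediate object really is an O-block), and it forces the peeled member $D$ to be unconfined to $S$, whence $D \subseteq \crib(S,\cdot)$ --- this is the containment $C_i \subseteq B$ without which step 6-(a)-iii never fires on that pair at all. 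You must also name the cliquish set to which these lemmas are applied ($\Omega$ itself in the potential-maximal-clique case; $N(A)$, cliquish because $A$ is a full component of it, in the O-block case), and your induction on $|\CC|$ has to carry the indexed statement ``$\CC \subseteq \II_i$ implies $A \in \OO_i$'' (the paper's Observation inside the proof of Theorem~\ref{thm:correctness}, proved by induction on $i$) rather than the unindexed ``$A$ is eventually generated,'' since otherwise ``ingredients are present before they are needed'' is an assertion rather than a consequence of the induction hypothesis. With these choices supplied your plan becomes the paper's proof; without them the inductive step does not go through.
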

\begin{proof}
We show that $\Ss_J$ computed by the algorithm, 
where $J$ denotes the final value of $j$, is exactly the
set of feasible potential maximal cliques for the given $G$ and $k$. The theorem then follows by
Lemma~\ref{lem:PMC_feasible}.

In the following proof, $\OO_i$, $\PP_i$, and
$\Ss_i$ for each $i$ stand for the final values of these program variables.

We first show by induction on $i$ that the following conditions are satisfied.
\begin{enumerate}
  \item For every $1 \leq h \leq i$, $(N(C_j), C_j)$ is a feasible I-block.
  \item $\II_i = \{C_h \mid 1 \leq h \leq i\}$.
  \item For every $A \in \OO_i$, $(N(A), A)$ is a 
  feasible O-block.
  \item Every $K \in \PP_i$ is a buildable potential maximal clique.
  \item Every $K \in \Ss_i$ is a feasible potential maximal clique.  
\end{enumerate}

Consider the base case $i = 0$.
Condition 1 vacantly holds. Conditions 2 and 3 also hold since
$\II_0 = \OO_0 = \emptyset$. Condition 4 holds:
$N[v]$ is confirmed to be a potential maximal clique before it is
added to $\PP_0$ and is buildable by the definition of buildability
(case 1).
Condition 5 holds since $\support(N[v]) = \emptyset$ implies that the
potential maximal clique $N[v]$ is feasible.

Suppose $i > 0$ and that the above conditions are satisfied for smaller values
of $i$.
\begin{enumerate}
  \item When $C_i$ is defined, there is some $i' < i$ and $K \in
  \Ss_{i'}$ such that $\outlet(K) \neq \emptyset$ and 
  $C_i = \crib(\outlet(K), K)$. By the induction hypothesis, 
  $K$ is a feasible potential maximal clique and hence, by
  Lemma~\ref{lem:I-block-feasible}, $(N(C_i), C_i)$ is a feasible I-block.
  \item As $\II_{i - 1} = \{C_h \mid 1 \leq h \leq i - 1\}$ and 
  $\II_{i} = \II_{i - 1} \cup \{C_i\}$,  $\II_i = \{C_h \mid 1 \leq h \leq i\}$
  holds.
  \item Let $A \in \OO_i \setminus \OO_{i - 1}$.
  Then there is some $B \in \OO_{i - 1}$ such that $A$ is outbound, $|N(A)| \leq
  k$, and $N(A) = N(C_i) \cup N(B)$. From the first two conditions, 
  $(N(A), A)$ is an O-block. By the
  induction hypothesis, $(N(B), B)$ is a feasible O-block and hence  
  $N(B) = \bigcup_{D \in \CC} N(D)$ for some set
  $\CC$ of feasible inbound components. As $C_i$ is feasible
  by 1 above and $N(A) = \bigcup_{D \in \CC \cup \{C_i\}} N(D)$,
  O-block $(N(A), A)$ is feasible. 
  \item Let $K \in \PP_i \setminus \PP_{i - 1}$.
  Then, $K$ is added to $\PP_i$ either at step 
  6-(a)-iii-A or at step 6-(a)-v. Consider the first case,
  Then, $K = N(B) \cup N(C_i)$ where $(N(B), B)$ is a feasible O-block
  and hence $N(B) = \bigcup_{D \in \CC} N(D)$ for some set
  $\CC$ of feasible inbound components. As $C_i$ is feasible,
  $K$ satisfies all the conditions in the second case of the definition of
  buildable potential maximal cliques. Consider next the second case,
  $K$ is obtained at step 6-(a)-v. Then, $K = N(A) \cup (n(v) \cap A)$, where
  $(N(A), A)$ is a feasible O-block, and the third case in the definition of
  buildable potential maximal cliques applies. 
  \item Let $K \in \Ss_i \setminus \Ss_{i-1}$. Then, $K \in \PP_i$ and
  is a buildable potential maximal clique by 4 above. The confirmed
  condition $\support(K) \subseteq \II_i$ ensures that $K$ is feasible,
  since every member of $\II_i$ is feasible by 1 and 2 above.
\end{enumerate}
We conclude that every member of $\Ss_J$ is a feasible
potential maximal clique.

In showing the converse, the following observation is crucial.
Let $(N(A), A)$ be a feasible O-block such that
$N(A) = \bigcup_{C \in \CC} N(C)$ for some set $\CC$ of
feasible components and suppose $\CC \subseteq \II_i$.
Then, $A \in \OO_i$. The proof is a straightforward induction on $i$.

The proof of the converse consists in showing the following by induction on $m$.
\begin{enumerate}
  \item For each feasible I-block $(N(C), C)$, with $|C| = m$, 
  there is some $i$ such that $C = C_i$.
  \item For each feasible O-block $(N(A), A)$ with $|A| = |V(G)| - m$, 
  there is some $i$ such that $A \in \OO_i$.
  \item For each buildable potential maximal clique $\Omega$
  such that $|\bigcup_{C \in \support(\Omega)} C| = m$,
  there is some $i$ such that $\Omega \in \PP_i$. 
  \item For each feasible potential maximal clique $\Omega$ 
  such that $|\bigcup_{C \in \support(\Omega)} C| = m$,
  there is some $i$ such that $\Omega \in \Ss_i$.
\end{enumerate} 
The base case $m = 0$ is vacantly true. Suppose
$m > 0$ and the statements hold for smaller values of $m$.
\begin{enumerate}
  \item 
Let $(N(C), C)$ be a feasible I-block with $|C| = m$.
Then, by Lemma~\ref{lem:I-block-feasible}, there
is some feasible potential maximal clique $\Omega$
such that $N(C) = \outlet(\Omega)$ and $C = \crib(N(C), \Omega)$.
We have $|\bigcup_{C \in \support(\Omega)} C| < m$,
since this union is a subset of $C \setminus (\Omega \setminus N(C))$.
Therefore, by the induction hypothesis, there is some $i$ such
that $\Omega \in \Ss_i$.  Therefore, 
$C$ is constructed as $C_j$ either at step 4-(b) or at step 6-(a)-vi. 
\item Let $(N(A), A)$ be a feasible O-block with
$|A| = |V(G)| - m$. Let $\CC$ be a set of feasible components such that
$N(A) = \bigcup_{C \in \CC} N(C)$ and let $C$ be an arbitrary member
of $\CC$. As $C$, $A$, and $N(C)$ are pairwise disjoint, we have
$|C| < m$.  Therefore, there is some $i_C$ such that
$C_{i_C} = C$. Set $i = \max\{i_C \mid C \in \CC\}$.
Then, $\CC \subseteq \II_i$ and hence $A \in \OO_i$, by the observation above.
\item Let $\Omega$ be a buildable potential maximal clique with 
$|\bigcup_{C \in \support(\Omega)} C| = m$.
In the first case of the definition of buildability, $\Omega$ is added
to $\PP_0$ at step 4. In the second case, 
we have $\Omega = \bigcup_{C \in \CC} N(C)$ for some   
$\CC \subseteq \support(\Omega)$ such that every member of $\CC$ is feasible.
Choose $\CC$ to be minimal subject to these conditions.
Let $C$ be an arbitrary member of $\CC$.
As $|C| \leq m$, by the induction hypothesis and 1 above, there is
some $i_C$ such that $\CC \subseteq \II_{i_C}$. 
Choose $C \in \CC$ so that $i_C$ is the largest and let the chosen
be $D$. Let $\CC' = \CC \setminus \{D\}$ and let $S = \bigcup_{C \in \CC'}
N(C)$.
By the minimality of $\CC$, $S$ is a proper subset of $\Omega$.
Therefore, $\crib(S, \Omega)$ is a full component associated with $S$ and
there is an outbound full component $A$ associated with $S$. 
As all members of 
$\CC'$ is feasible and $|S| \leq k$, $(S, A)$ is a feasible O-block. By
the choice of $D$, we have $\CC' \subseteq \II_{i_D - 1}$ and hence  
$A \in \OO_{i_D - 1}$ by the observation above. At step 6-(a)-iii-A in the
iteration for $i = i_D$, $\Omega$ is put into $\PP_{i_D}$.
\item Let $\Omega$ be a feasible potential maximal clique with 
$|\bigcup_{C \in \support(\Omega)} C| = m$.
Then, by 3 above, there is some $i_1$ such that 
$\Omega \in \PP_{i_1}$.  Furthermore, as every member $C$ of $\support(\Omega)$
is feasible and $|C| \leq m$, there is some $i_2$ such that
$\support(\Omega) \subseteq \II_{i_2}$, by 1 above.
At step 7 in the iteration for $i = \max\{i_1, i_2\}$, $\Omega$ is put into
$\Ss_i$. 
\end{enumerate}
We conclude that every feasible potential maximal clique is
in $\Ss_J$. This completes the proof.
\qed 
\end{proof}

\section{Running time analysis}
\label{sec:time}
The running time of our algorithm is stated in terms of the
the number of positive subproblem instances. Given $G$ and
$k > 0$, let $\II_G^k$ denote the set of feasible I-blocks and
$\OO_G^k$ the set of feasible O-blocks.

\begin{observation}
\label{obs:run_time}
Given $G$ and $k > 0$, algorithm PID-BT runs in
$O^*(|\II_G^k|\cdot|\OO_G^k|)$ time.
\end{observation}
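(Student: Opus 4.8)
The plan is to bound the total work of Algorithm PID-BT by charging each unit of work to a pair consisting of a feasible I-block and a feasible O-block. The outer loop (step 6) iterates once for each value of $i$ from $1$ to $J$, and by conditions 1 and 2 established in the proof of Theorem~\ref{thm:correctness}, each such $i$ corresponds bijectively to a distinct feasible I-block $(N(C_i), C_i)$. Hence the number of outer iterations is exactly $|\II_G^k|$. The dominant inner computation at each $i$ is step 6-(a)-iii, which loops over all $B \in \OO_{i-1}$; since every member of $\OO_{i-1}$ is (the component part of) a feasible O-block, this loop runs at most $|\OO_G^k|$ times. Multiplying the two gives the $|\II_G^k| \cdot |\OO_G^k|$ factor, and the remaining polynomial factors (neighborhood computations, potential-maximal-clique tests, crib and outlet evaluations, and the membership checks against the lists) are all absorbed into the $O^*$ notation.

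First I would make precise the claim that $|\II_G^k|$ equals the final value $J$ of $j$, by invoking conditions 1--2 from the correctness proof: each $C_i$ added is a distinct feasible I-block, and conversely the ``converse'' direction of that proof shows every feasible I-block appears as some $C_i$. Next I would account for the one-time initialization in step 4, which scans $V(G)$ and performs a polynomial-time test per vertex, contributing only an $O^*(1)$ additive term. Then, for the main loop, I would itemize the substeps 6-(a)-i through 6-(a)-vi and argue that each is dominated by iterating over $\OO_{i-1}$ or over $\PP_i \setminus \Ss_{i-1}$, each bounded in size by the relevant positive-instance count times a polynomial, and that the newly generated O-blocks at step 6-(a)-iv--v are few and cheap to process.

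A subtlety I expect to handle carefully is that the inner loop ranges over $\OO_{i-1}$ rather than over $\OO_G^k$ directly; since $\OO_{i-1} \subseteq \OO_G^k$ throughout (condition 3), the bound $|\OO_{i-1}| \le |\OO_G^k|$ holds uniformly, so the product bound is legitimate. I would also note that each individual operation inside the loop --- computing $K = N(C_i) \cup N(B)$, testing whether $K$ is a potential maximal clique via Lemma~\ref{lem:pmc_charact}, and checking for a full component --- is polynomial in $|V(G)|$, again absorbed by $O^*$.

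The main obstacle, and the place where the argument is least mechanical, is step 7 together with the maintenance of the sets $\PP_i$ and $\Ss_i$ and the membership test $\support(K) \subseteq \II_i$: one must verify that the total work spent generating, de-duplicating, and testing buildable potential maximal cliques across all iterations does not exceed the stated product bound. The cleanest way I would resolve this is to observe that each buildable potential maximal clique is generated at step 6-(a)-iii-A or 6-(a)-v in association with a specific $(C_i, B)$ pair or a specific newly created O-block, so the number of generation events is itself $O^*(|\II_G^k| \cdot |\OO_G^k|)$, and that each feasibility test for a potential maximal clique costs only a polynomial factor; the duplicate-elimination can be implemented with polynomial-time lookups so that it too folds into $O^*$. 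With these accounting steps in place, summing over all $i$ yields the claimed $O^*(|\II_G^k| \cdot |\OO_G^k|)$ running time.
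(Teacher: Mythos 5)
Your proposal is correct and follows essentially the same route as the paper's own (very terse) proof: the outer loop of step~6 runs once per feasible I-block, hence $|\II_G^k|$ times, and the work inside each iteration is charged to elements of $\OO_{i-1} \subseteq \OO_G^k$ at a polynomial cost per element, with all remaining bookkeeping absorbed into the $O^*$ notation. Your additional care about step~6-(a)-vi and step~7 --- charging the generation and feasibility-testing of potential maximal cliques to generation events rather than re-scanning all pending cliques every iteration --- is a legitimate elaboration of a point the paper's proof glosses over, not a different argument.
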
 
\begin{proof}
The number of iteration in step 6, where $i$ is incremented
each time, is $|\II_G^k|$. In each iteration step,
every computation step may be charged to each element
of $\OO_{i - 1}$ and the total number of steps charged to 
a single element of $\OO_{i - 1}$ is $n^{O(1)}$.
Since $|\OO_{i - 1}| \leq |\OO_G^k|$, we have the claimed time bound. 
\qed 
\end{proof}

The bound in this observation is incomparable to the 
previous bounds on non-PID versions of the BT algorithm,
which run in $O^*(|\Pi_G|)$ time when $\Pi_G$, 
the set of potential maximal cliques in $G$, is given.
In \cite{FV12},
in addition to a combinatorial bound of $|\Pi_G|= O(1.7549^n)$,
it was shown that $\Pi_G$ can be computed in $O^*(\Pi_G)$
time.

It should be emphasized, however, that it is not known whether
the decision problem version of the treewidth problem with
given $k$ can be solved in 
$O^*(|\Pi_G^{k+1}|)$ time, where 
$\Pi_G^k$ is the set of potential maximal cliques of
cardinality at most $k$ in $G$. The bottleneck here is
the time to list all members of $\Pi_G^{k+1}$.
Although a nontrivial upper bound on $|\Pi_G^{k+1}|$ in terms of
$n$ and $k$, together with a running time bound based on it, 
is given in \cite{FV12}, a huge gap between
the actual value $|\Pi_G^{k+1}|$ and the upper bound is observed in
practice, as shown in the next section.
This is the gap that makes the bound in Obseravation~\ref{obs:run_time}
interesting.

\section{Experimental analysis}
\label{sec:experimental}
To study the strength of the running time bound of
Observation~\ref{obs:run_time} from a practical view point, 
we have performed some experiments, in which 
we count the number of combinatorial objects
involved in the treewidth computation.
We first compare the actual number of relevant potential maximal
cliques (that is, of cardinality at most $k + 1$ where
$k$ is the treewidth) 
with the theoretical uppser
bounds on that number: the naive bound of $\tbinom{n}{k + 1}$ and
an assymptotically stronger bound of
$n (\tbinom{\lceil (2n + k + 7)/3 \rceil}{k
    + 2}  + 
    \tbinom{\lceil (2n + k + 4)/2 \rceil}
    {k + 1})$ given in \cite{FV12}.
Table~\ref{tab:pmcs} shows the results
on some random instances, where the number of vertices 
$n$ is 20, 30, 40, or 50, the number of edges $m$ is $2n$, $3n$, 
$4n$ or $5n$, and the graph for each pair $(n, m)$
is chosen uniformly at random
from the set of all graphs with $n$ vertices and $m$ edges.
Huge gaps between the actual number and the upper bounds 
are apparent.

%\begin{table}[htbp]
\begin{table}[H]
  \begin{center} 
    \begin{tabular}{|r|r|r|r|r|r|} \hline 
    $n = |V|$& $|E|$ & $k = \tw$ & 
    PMCs ($\leq k + 1$)
    & $\tbinom{n}{k + 1}$ 
    &  $n (\tbinom{\lceil (2n + k + 7)/3 \rceil}{k
    + 2}  + 
    \tbinom{\lceil (n + k + 4)/2 \rceil}
    {k + 1})$ \\
\hline    
    20& 40   & 6 & 115& 77520& 1003860\\ \hline  
    20& 60   & 8 & 96& 167960& 2076360\\ \hline
20& 80  & 11 &  121& 125970& 1921680\\ \hline
20& 100 & 11 &  37& 125970& 1921680\\ \hline
30& 60 & 7&  559& 5852925& 67393950\\ \hline
30& 90  & 11 &  682& 86493225& 352580340\\ \hline
30& 120 & 14 &  1137& 155117520&  430361970\\ \hline 
30& 150 & 16 &  768& 119759850&   426140550\\ \hline 
40& 80   & 8 &  5341& 273438880&  2705471600\\ \hline 
40& 120 & 14 &  10372& 40225345056&  91260807600\\ \hline 
40& 160 & 18 &  17360& 131282408400&  135562547400\\ \hline 
40& 200 & 20 &  6820& 131282408400&  157012867200\\ \hline 
50& 100 & 10 & 6029& 37353738800&  201991095800\\ \hline
50& 150 & 16 & 48068& 9847379391150&  10332510412500\\ \hline 
50& 200 & 20 & 36388& 67327446062800&  53246262826500\\ \hline 
50& 250 & 24 & 47729& 126410606437752& 52230760068000 \\ \hline
\end{tabular}   \end{center}
    \caption{The numbers of relevant potetntial maximal cliques and 
    their upper bounds}
	\label{tab:pmcs}
\end{table}

Since the running time bound in Observation~\ref{obs:run_time} involves
the quantity $|\OO_G^k|$ which is not theoretically upper-bounded
by a function of $|\Pi_G^{k + 1}|$, the gaps observed in Table~\ref{tab:pmcs}
alone may not be sufficient to support the importance of this
running time bound. To address this issue,
we have counted more combinatorial objects involved in our PID computation
on the same graph instances: in addition to relevant potential
maximal cliques counted above, all potential maximal cliques, 
relevant minimal separators, all minimal separators, 
feasible I-blocks, feasible O-blocks 
and feasible potential maximal cliqeus.
Here, the input $k$ to the decision problem is set to the treewidth of the
graph.

Table~\ref{tab:numObjects} shows the result. 
We see that the number of feasible O-blocks is
smaller than the number of relevant potential mmaximal cliques,
as far as these instances are concerend. This, together with
what we have observed in Table~\ref{tab:pmcs}, provides an evidence that
the running time bound of Observation~\ref{obs:run_time} is 
more relevant from a practical point of view than 
the running time bounds of known theoretical algorithms.

We also see that the number of all potential maximal
cliques grows much faster than the number of relevant potential maximal
cliques.  This shows the advantage of our algorithm which avoids
generating all potential maximal cliques.

To summarize, our PID algorithm has advantages over the
standard BT algorithms
because the running time upper bounds of those algorithms 
are either in terms of a 
combinatorial {\em upper bound} on the number of relevant potential maximal
cliques or in terms of the actual number of {\em all} potential maximal cliques:
our experiments reveal huge gaps between the actual number of
relevant potential maximal cliques and both of these
quantities.
Note that, if there is an efficient method of generating relevant
potential maximal cliques, a non-PID version of the BT algorithm
might outperform our PID version.

%\begin{table}[htbp]
\begin{table}[H]
  \begin{center} 
    \begin{tabular}{|r|r|r|r|r|r|r|r|r|r|} \hline
    & & & \multicolumn{2}{|c|}{minimal separators}&
    \multicolumn{2}{|c|}{PMCs}&\multicolumn{3}{|c|}{feasible objects} \\
    \cline{4-10}  $|V|$& $|E|$ & $\tw$ & all & $\leq\tw$ & all &
    $\leq \tw + 1$& I-blocks & O-blocks & PMCs \\ \hline 
    20& 40   & 6 &  98& 51 & 376& 115& 19 & 26 & 37 \\ \hline 
    20& 60   & 8 &  191& 48 & 796& 96& 46 & 108 & 93 \\ \hline
20& 80  & 11 &  185& 122 & 698& 376& 121& 158& 370 \\ \hline
20& 100 & 11 &  107& 25 &  354& 37& 24& 32& 36 \\ \hline
30& 60 & 7&  535& 185 & 3122& 559& 114 & 170 & 334\\ \hline
30& 90  & 11 &  2983& 247 & 20154& 682& 228& 708& 618 \\ \hline
30& 120 & 14 &  2713& 376 & 16736& 1137& 352& 804& 1055 \\ \hline 
30& 150 & 16 &  1913& 281 &  10535& 768& 240& 498& 647  \\ \hline 
40& 80   & 8 &  14842& 1070 & 178661& 5341& 840& 2965& 4154 \\ \hline 
40& 120 & 14 &  164773& 2356 & 1740644& 10372& 2080& 8637& 8577\\ \hline 
40& 160 & 18 &  134485& 3952 & 1251656& 17360& 3289& 10023& 13646 \\ \hline 
40& 200 & 20 &  52182& 1790 & 423691& 6820& 1502& 4749& 5347 \\\hline 
50& 100 & 10 & 96499 &1361 &1123621 &6029&779 & 2171 & 2914 \\ \hline
50& 150 & 16 & 1792713 & 9152 &$>$2000000 & 48068& 8099 & 36881& 39803 \\ \hline
50& 200 & 20 & 2130811& 7878 &$>$2000000 & 36388& 6956 & 28247& 29842 \\ \hline
50& 250 & 24 & 1452449 & 10571&$>$2000000 & 47729& 8949 & 30834& 37115 \\ \hline
\end{tabular}   \end{center}
    \caption{The numbers of principal objects in treewidth computation}
	\label{tab:numObjects}
\end{table}

\section{Implementation}
\label{sec:implementation}
In this section, we sketch two important ingredients of our
implementation.  Although both are crucial in obtaining
the result reported in Section~\ref{sec:performance}, 
our work on this part is preliminary and improvements
are the subject of future research.

\subsection{Data structures}
The crucial elementary operation in our algorithm is the following.
We have a set $\OO$ of feasible O-blocks obtained so far and,
given a new feasible I-block $(N(C), C)$, need to 
find all members $(N(A), A)$
of $\OO$ such that $C \subseteq A$ and $|N(C) \cup N(A)| \leq k + 1$.
As the experimental analysis in the previous section shows,
there is only a few such $A$ on average for the tested instances
even though $\OO$ is usually huge. To support an efficient
query processing, we introduce an abstract data structure
we call a block sieve.

Let $G$ be a graph and $k$ a positive
integer.  A {\em block sieve} for graph $G$ and width $k$ is a data structure
storing vertex sets of $V(G)$ which supports the following operations.
\begin{description}
\item[store($U$)]: store vertex set $U$ in in the block sieve.
\item[supersets($U$)]: return the list of entries $W$ stored in the block sieve
such that $U \subseteq W$ and $|N(U) \cup N(W)| \leq k + 1$.
\end{description}
Data structures for superset query
have been studied \cite{Savnik13}. The second condition above on 
the retrieved sets, however, appears to make this data structure new.
For each $U \subseteq V(G)$, we define the {\em margin} of $U$
to be $k + 1 - |N(U)|$. Our implementation of block sieves described 
below exploits an upper bound on the margins of vertex sets stored in the
sieve.

We first describe how such block sieves with upper bounds on margins
are used in our algorithm. 
Let $\OO$ be the current set of O-blocks. 
We use $t$ block sieves $\BB_1$, \ldots, $\BB_t$, each $\BB_i$
having a predetermined upper bound $m_i$ on the margins of
the sets stored. We have $0 < m_1 < m_2 < \ldots < m_t = k$.
We set $m_0 = 0$ for notational ease below.
In our implementation, we 
choose roughly $t = \log_2 k $ and $m_i = 2^i$ for $0 < i < t$.
For each $(N(A), A)$ in $\OO$, $A$ is stored in 
$\BB_i$ such that the margin $k + 1 - |N(A)|$ is $m_i$
or smaller but larger than $m_{i - 1}$.
When we are given an I-block $(N(C), C)$ and are to
list relevant blocks in $\OO$, we query all of the
$t$ blocks with the operations $\mathop{\rm supersets}(C)$.
These queries as a whole return the list of all vertex sets $A$ 
such that $(N(A), A) \in \OO$, $C \subseteq A$, and 
$|(N(A)\cup N(C))| \leq k + 1$. 

We implement a block sieve by a trie $\TT$. The upper bound $m$
on margin is not used in the construction of the sieve; it is
used in the query time.  In the following, we assume
$V(G) = \{1, \ldots, n\}$ and, by an interval
$[i, j]$, $1 \leq i \leq j \leq n$, we mean the
set $\{v: i \leq v \leq j\}$ of vertices.
Each non-leaf node $p$ of $\TT$ is labelled with
a non-empty interval $[s_p, f_p]$, such that
$s_r = 0$ for the root $r$, $s_p = f_q + 1$ if $p$ is a child of
$q$, and $f_p = n$ if $p$ is a parent of a leaf. 
Each edge $(p, q)$ which connects
node $p$ and a child $q$ of $p$, is labelled with a subset $S_{(p, q)}$
of the interval $[s_p, f_p]$.  Thus, for each node $p$,
the union of the labels of the edges along the path from the root
to $p$ is a subset of the interval $[1, s_p - 1]$, 
or $[1, n]$ when $p$ is a leaf, which we denote
by $S_p$.  
The choice of interval $[s_p, f_p]$ for each node $p$ is heuristic.
It is chosen so that the number of descendants of $p$ is not too large
or too small.  In our implementation, the interval size is
adaptively chosen from $8$, $16$, $32$, and $64$.

Each leaf $q$ of trie $\TT$ represents a single set stored at this 
leaf,
namely $S_q$ as defined above. We denote by $S(\TT)$ the set of all sets
stored in $\TT$.  Then, for each node
$p$ of $\TT$, the set of sets stored under $p$ is 
$\{U \mid U \cap [1, p] = S_p\}$.

We now describe how a query is processed against this 
data structure.  Suppose query $U$ is given.
The goal is to visit all leaves $q$ such that 
$U \subseteq S_q$ and $|N(U) \cup N(S_q)| \leq k + 1$.
This is done by a depth-first traversal of the trie $\TT$.
When we visit node $p$, we have the invariant
that $U \cap [1, f_p] \subseteq S_p$, since otherwise
no leaf in the subtree rooted at $p$ stores a superset of $U$.
Therefore, we descend from $p$ to a child $p'$ of $p$ only
if this invariant is maintained.
Moreover, we keep track of the quantity $i(p, U) =
|N(U) \cap S_p|$ in order to make further pruning of 
search possible.  For each leaf $q$
below $p$ such that $U \subseteq S_q$, 
we have $i(q, U) \geq i(p, U)$.
Combining this with eauality $|N(U) \setminus N(S_q)| =
|N(U) \cap S_q| = i(q, U)$, we have
$|N(U) \cup N(S_q)| \geq |N(S_q)| + i(p, U)$.
Since we know an upper bound $m$ on the margin 
$k + 1 - |N(S_q)|$ of $S_q$, or lower bound $k + 1 - m$ on
$|N(S_q)|$,
we may prune the search under node $p$ if $i(p, U) > m$,
since this inequality implies $|N(U) \cup N(S_q)| > k + 1$
for every leaf $q$ under $p$.
When we reach a leaf $q$, we test if $|N(U) \cup N(S_q)| \leq k + 1$ indeed
holds.

\subsection{Safe separators}
The notion of safe separators for treewidth was introduced by Bodlaender and
Koster \cite{BK06}: a separator $S$ of $G$ is {\em safe} if completing $S$
into a clique does not change the treewidth of $G$. If we find a safe separator
$S$ then the problem of deciding tree width of $G$ reduces to that of deciding
the treewidth of $G\langle C \rangle$ for each component $C$ associated with
$S$. Preprocessing $G$ into such independent subproblems is highly desirable
whenever possible. 

The above authors observed that  
a powerful sufficient condition for safeness can be formulated based on graph
minors.  A {\em labelled minor} of $G$ is a graph obtained from $G$ by zero
or more applications of the following operations. (1) Edge contraction: choose 
an edge $\{u, v\}$, replace $u$ and $v$ by a single new vertex and let all
neighbors of $u$ and $v$ be adjacent to this new vertex; name the new vertex
as either $u$ or $v$.  (2) Vertex deletion: delete a vertex together with all
incident edges. (3) Edge deletion.
\begin{lemma}
\label{lem:minoar-safe}
(\cite{BK06}) A separator $S$ of $G$ is safe if, for every component $C$
associated with $S$, $G[V(G) \setminus C]$ contains clique $S$ as a labelled
minor.
\end{lemma}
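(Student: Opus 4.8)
The plan is to establish the two inequalities $\tw(G) \le \tw(G_S)$ and $\tw(G_S) \le \tw(G)$, where $G_S$ denotes the graph obtained from $G$ by completing the separator $S$ into a clique; safeness is exactly the resulting equality. The first inequality is immediate, since $G$ is a spanning subgraph of $G_S$ and treewidth does not decrease when edges are added. All the work lies in the reverse inequality.

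For the reverse inequality I would first invoke the clique-separator decomposition of treewidth. Since $S$ is a clique and a separator of $G_S$, and the components of $G_S$ with $S$ removed are exactly the components $C_1, \ldots, C_r$ associated with $S$ in $G$ (adding edges inside $S$ does not affect connectivity outside $S$), we have $\tw(G_S) = \max_i \tw(G_S[S \cup C_i])$. This is the standard fact that a clique separator splits treewidth; it follows by taking optimal tree-decompositions of the parts, each of which has a bag containing $S$ because $S$ is a clique in each part, and gluing them at those bags. Hence it suffices to bound $\tw(H_i) \le \tw(G)$ for each $i$, where $H_i = G_S[S \cup C_i]$; note that $N(C_i) \subseteq S$, so $H_i$ already records every edge incident to $C_i$.

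The heart of the argument is to exhibit $H_i$ as a subgraph of a minor of $G$, so that minor-monotonicity of treewidth yields $\tw(H_i) \le \tw(G)$. Here I would apply the hypothesis with $C = C_i$: the graph $G[V(G) \setminus C_i]$ contains $S$ as a labelled clique minor, which supplies pairwise disjoint, connected branch sets $\{B_s\}_{s \in S}$ with $B_s \subseteq V(G) \setminus C_i$, pairwise adjacent in $G$, and---crucially, because the minor is labelled---$s \in B_s$ for every $s$. Viewing these branch sets inside the full graph $G$, I would contract each $B_s$ to a single vertex named $s$, delete every vertex of $V(G) \setminus C_i$ lying in no branch set, and leave $C_i$ untouched. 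Call the resulting minor $M$; its vertex set is $S \cup C_i$. Then $S$ is a clique in $M$ (the branch sets are pairwise adjacent), $M[C_i] = G[C_i]$, and every edge of $G$ between some $c \in C_i$ and some $s \in S$ survives in $M$ precisely because $s \in B_s$. Comparing edge sets shows $E(H_i) \subseteq E(M)$, so $H_i$ is a subgraph of the minor $M$ of $G$, whence $\tw(H_i) \le \tw(M) \le \tw(G)$.

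Combining the pieces gives $\tw(G_S) = \max_i \tw(H_i) \le \tw(G)$, and with the trivial inequality this forces $\tw(G_S) = \tw(G)$, so $S$ is safe. The one delicate point---and the step I expect to be the main obstacle to write cleanly---is the edge-set verification in the previous paragraph: the labelled condition $s \in B_s$ is exactly what guarantees that the cut edges between $C_i$ and $S$ are not destroyed by the contraction, and one must confirm that no edge of $H_i$ is lost while noting that any extra edges of $M$ are harmless. Everything else reduces to routine monotonicity bookkeeping.
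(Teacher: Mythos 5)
Your proof is correct. Note that the paper offers no proof of this lemma at all---it is quoted directly from Bodlaender and Koster \cite{BK06}---so there is nothing internal to compare against; your argument (the trivial inequality $\tw(G) \le \tw(G_S)$, the clique-separator identity $\tw(G_S) = \max_i \tw\bigl(G_S[S \cup C_i]\bigr)$, and the use of the labelled branch sets $B_s$ with $s \in B_s$ to exhibit each part $G_S[S \cup C_i]$ as a subgraph of a minor of $G$, the labelling being exactly what preserves the edges between $C_i$ and $S$ under contraction) is the standard argument from that source, correctly executed.
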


Call a separator {\em minor-safe} if it satisfies the sufficient condition for
safeness stated in this lemma. Bodlaender and Koster \cite{BK06} showed that if 
$S$ is a minimal separator and is an almost clique (deleting some single vertex
makes it a clique) then $S$ is minor-safe and moreover that the set of all
almost clique minimal separators can be found in $O(n^2 m)$ time, where
$n$ is the number of vertices and $m$ is the number of edges.

We aim at capturing as many minor-safe separators as possible, at the expense of
theoretical running time bounds on the algorithm for finding them.  Thus, in our
approach, both the algorithm for generating candidate separators and
the algorithm for deciding minor-safeness are heuristic.  For candidate
generation, we use greedy heuristic for treewidth such as min-fill and
min-degree: the separators in the resulting tree-decomposition
are all candidates for safe separators.

When we apply our heuristic decision algorithm for minor-safeness to candidate
separator $S$, one of the following occurs.
\begin{enumerate}
  \item The algorithm answers ``YES''. In this case, a required
  labelled clique minor has been found for every component associated $S$ and
  hence $S$ is minor-safe.
  \item The algorithm answers ``DON'T KNOW''. In this case, the
  algorithm has failed to find a labelled clique minor for at least one
  component, and hence it is not known if $S$ is minor-safe or not.
  \item The algorithm aborts, after reaching the prescribed number of execution
  steps.
\end{enumerate}

Our heuristic decision algorithm works in two phases.
Let $S$ be a separator, $C$ a component associated with $S$, and $R = V(G)
\setminus (S \cup C)$.
In the first phase, we contract edges in $R$ and obtain a graph $B$ on vertex
set $S \cup R'$, where each vertex of $R'$ is a contraction of some
vertex set of $R$ and $B$ has no edge between vertices in $R'$.
For each pair $u, v$ of distinct vertices in $S$, let $N(u, v)$ denote the common
neighbors of $u$ and $v$ in graph $B$.
The contractions are performed with the goal of making $|N(u, v) \cap R'|$ large
for each missing edge $\{u, v\}$ in $S$. In the second phase, for each
missing edge $\{u, v\}$, we choose a common neighbor $w \in N(u, v) \cap R'$
and contract either $\{u, w\}$ or $\{v, w\}$.  The choice of the next missing
edge to be processed and the choice of the common neighbor are done as follows.
Suppose the contractions in the second phase are done for some missing edges
in $S$.  For each missing edge $\{u, v\}$ not yet ``processed'', let
$N'(u, v)$ be the set of common neighbors of $u$ and $v$ that
are not yet contracted with any vertex in $S$.  We choose $\{u, v\}$ with
the smallest $|N'(u, v) \cap R'|$ to be processed next.  Tie-breaking when
necessary and the choice of the common neighbor $w$ in $N'(u, v) \cap R'$ to be
contracted with $u$ or $v$ is done in such a way that the minimum of $|(N'(x, y)
\cap R') \setminus \{w\}|$ is maximized over all remaining missing edges $\{x,
y\}$ in $S$.

The performance of these heuristics strongly depends on the instances.
For PACE 2017 public instances, they work quite well. Table~\ref{tab:PACE-safe}
shows the preprocessing result on the last 10 of those instances.
See Section~\ref{sec:performance} for the description of those instances and
the computational environment for the experiment. 
For each instance, the number of safe separators found and the maximum
subproblem size in terms of the number of vertices, after the graph
is decomposed by the safe separators found, are listed. 
The results show that 
these instances, which are deemed the hardest among
all the 100 public instances, are quickly decomposed into manageable subproblems
by our preprocessing.

%\begin{table}[htbp]
\begin{table}[H]
  \begin{center} 
    \begin{tabular}{|c|r|r|r|r|r|r|} \hline
    name & $|V|$& $|E|$ & $tw(G)$ & safe separators found & max subproblem 
    & time(secs)
    \\
    \hline  
ex181 & 109 & 732 & 18 & 18& 89 &0.078\\ \hline  
ex183 & 265 & 471 & 11 & 173 & 76 & 0.031\\ \hline  
ex185 & 237 & 793 & 14 & 142 & 52 & 0.046\\ \hline  
ex187 & 240 & 453 & 10 & 138& 81 & 0.031\\ \hline  
ex189 & 178 & 4517 & 70 & 6& 161& 0.062\\ \hline  
ex191 & 492 & 1608 & 15 & 184& 132& 0.171\\ \hline  
ex193 & 1391 & 3012 & 10 & 791 & 119& 3.17\\ \hline  
ex195 & 216 & 382 & 10 & 114 & 84& 0.015\\ \hline  
ex197 & 303 & 1158 & 15 & 176 & 56 & 0.062\\ \hline  
ex199 & 310 & 537 & 9 & 157 & 131 & 0.046\\ \hline  
\end{tabular}   
\end{center}
    \caption{Safe separator preprocessing on PACE 2017 instances}
	\label{tab:PACE-safe}
\end{table}

On the other hand, these heuristics turned out useless for most of the DIMACS
graph coloring instances: no safe separators are found for those instances.
We suspect that this is not the limitation of the heuristics but is simply
because those instances lack minor-safe separators.  We need, however, further
study to get a firm conclusion.

\section{Performance results}
\label{sec:performance}
We have tested our implementation on two sets of instances.
The first set comes from the DIMACS graph coloring challenge \cite{JT93} and
has served as a standard benchmark suite for treewidth in the literature
\cite{GD04,BK06,Musliu08,SH09,BFKKT12,BJ14}. 
The other is the set of public instances posed by the exact treewidth track of 
PACE 2017 \cite{PACE17}.

The computing environment for the experiment is as follows.
CPU: Intel Core i7-7700K, 4.20GHz; RAM: 32GB; 
Operating system: Windows 10, 64bit; 
Programming language: Java 1.8; JVM: jre1.8.0\_121.
The maximum heap space size is 6GB by default and is 24GB where
it is stated so. The implementation is single threaded, except that
multiple threads may be invoked for garbage collection by JVM.
The time measured is the CPU time, 
which includes the garbage collection time.

To determine the treewidth of a given instance
we use our decision procedure with $k$ being incremented
one by one, starting from the obvious lower bound,
namely the minimum degree of the graph. 
Binary search is not used because
the cost of overshooting the exact treewidth can be huge.
We do not feel the need of using stronger lower bounds either, 
since the cost of executing the decision procedure for
$k$ below such lower bounds is usually quite small.

Table~\ref{tab:DIMACS} shows the results on 
DIMACS graph coloring instances.
Each row shows the name of the instance, the number of
vertices, the number of edges, the exact treewidth computed 
by our algorithm, CPU time in seconds, and the previously best
known upper and lower bounds on the treewidth.
Rows in bold face show the newly solved instances.
For all but three of them,
the previous best upper bound has turned out optimal: only the
lower bound was weaker. 
In this experiment, however, no knowledge of previous bounds are used
and our algorithm independently determines the exact treewidth.

The results on ``queen" instances illustrate
how far our algorithm has extended 
the practical limit of exact treewidth
computation. Queen7\_7 with
49 vertices is the largest instance previously solved, while
queen10\_10 with 100 vertices is now solved.
Also note that all previously solved instances are
fairly easy for our algorithm: all of them are solved within 10
seconds per instance and many of them within a second.

%\begin{table}[htbp]
\begin{table}[H]
  \begin{center}
  {\small
    \begin{tabular}{|c|r|r|r|r|r|r|} \hline
    name & $|V|$& $|E|$ & $\tw$ & time(secs) & prev UB
    & prev LB\\
    \hline anna & 138 & 493 & 12 & 0.078&12 & 12\\ \hline  
david & 87 & 406 & 13 & 0.031 & 13 & 13\\ \hline 
{\bf DSJC125.5}& {\bf 125}& {\bf 3891}& {\bf 108} & 
{\bf 459}& {\bf 108}&{\bf 56}\\ \hline  
DSJC125.9 & 125 & 6961 & 119 & 0.062 & 119 & 119 \\ \hline  
{\bf DSJC250.9}& {\bf 250}& {\bf 27897}& {\bf 243}& {\bf 0.44}& 
{\bf 243}& {\bf 212}\\ \hline  
{\bf DSJC500.9}& {\bf 500}& {\bf 112437}& {\bf 492}& 
{\bf 14}& {\bf 492}& {\bf 433}
\\ \hline 
{\bf DSJR500.5}& {\bf 500}& {\bf 58862}& {\bf 246}& {\bf 546}& {\bf -}&
{\bf -}\\ \hline 
DSJR500.1c & 500 & 121275 & 485 & 2.12 & 485 & 485 \\ \hline  
fpsol2.i.1 & 496 & 11654 & 66 & 3.30 & 66 & 66\\ \hline  
fpsol2.i.2 & 451 & 8691 & 31 & 5.66 & 31 & 31\\ \hline  
fpsol2.i.3 & 425 & 8688 & 31 & 5.68 & 31 & 31\\ \hline  
{\bf games120}$^\dagger$ & {\bf 120} & {\bf 638} & {\bf 32} & 
{\bf 94738} &{\bf 32} &{\bf 24} \\ \hline
{\bf homer}$^\dagger$ &{\bf 561} &{\bf 1628} &{\bf 30}&
{\bf 2765} &{\bf 31} &{\bf 26} \\ \hline
huck & 74 & 301 & 10 & 0.012 & 10 & 10\\ \hline  
inithx.i.1 & 864 & 18707 & 56 & 8.10 & 56 & 56 \\ \hline  
inithx.i.2 & 645 & 13979 & 31 & 8.14 & 31 & 31\\ \hline  
inithx.i.3 & 621 & 13969 & 31 & 10 & 31 & 31\\ \hline  
jean & 80 & 254 & 9 & 0.031 & 9 & 9 \\ \hline  
miles250 & 128 & 387 & 9 & 0.000 & 9 & 9 \\ \hline  
miles500 & 128 & 1170 & 22 & 0.11 & 22 & 22  \\ \hline  
{\bf miles750}& {\bf 128}& {\bf 2113}& {\bf 36}&
{\bf 0.23}& {\bf 36}& {\bf 35}
\\ \hline 
miles1000 & 128 & 3216 & 49 & 0.33 & 49 & 49\\ \hline  
miles1500 & 128 & 5198 & 77 & 0.45 & 77 & 77 \\ \hline  
mulsol.i.1 & 197 & 3925 & 50 & 1.41 & 50 & 50  \\ \hline  
mulsol.i.2 & 188 & 3885 & 32 & 1.77 & 32 & 32  \\ \hline  
mulsol.i.3 & 184 & 3916 & 32 & 1.80 & 32 & 32 \\ \hline  
mulsol.i.4 & 185 & 3946 & 32 & 1.78 & 32 & 32 \\ \hline  
mulsol.i.5 & 186 & 3973 & 31 & 1.80 & 31 & 31 \\ \hline  
myciel2 & 5 & 5 & 2 & 0.000 & 2 & 2 \\ \hline  
myciel3 & 11 & 20 & 5 & 0.000 & 5 & 5  \\ \hline  
myciel4 & 23 & 71 & 10 & 0.015 & 10 & 10 \\ \hline  
myciel5 & 47 & 236 & 19 & 0.33 & 19 & 19 \\ \hline  
{\bf myciel6}& {\bf 95}& {\bf 755}& {\bf 35}& 
{\bf 419}& {\bf 35}& {\bf 29} \\ \hline 
queen5\_5 & 25 & 160 & 18 & 0.000 & 18 & 18 \\ \hline  
queen6\_6 & 36 & 290 & 25 & 0.031 & 25 & 25 \\ \hline  
queen7\_7 & 49 & 476 & 35 & 0.19 & 35 & 35 \\ \hline  
{\bf queen8\_8}& {\bf 64}& {\bf 728}& {\bf 45}& 
{\bf 4.16}& {\bf 45}& {\bf 25} \\ \hline  
{\bf queen9\_9}& {\bf 81}& {\bf 1056}& {\bf 58}
& {\bf 274}& {\bf 58}& {\bf 35} \\ \hline  
{\bf queen8\_12}& {\bf 96}& {\bf 1368}& {\bf 65} & 
{\bf 649}& {\bf -}& {\bf 39}\\ \hline  
{\bf queen10\_10}& {\bf 100}& {\bf 1470}& {\bf 72} & 
{\bf 20934}& {\bf 72}& {\bf 39}\\ \hline  
zeroin.i.1 & 211 & 4100 & 50 & 1.09 & 50& 50\\ \hline  
zeroin.i.2 & 211 & 3541 & 32 & 1.64 & 32& 32\\ \hline  
zeroin.i.3 & 206 & 3540 & 32 & 1.55 & 32& 31\\ \hline  
\end{tabular}
\\
Previous upper bounds from \cite{GD04} and \cite{Musliu08}; previous lower
bounds from \cite{GD04} and \cite{BWK06}.\\
$^\dagger$ 24GB heap space is used for these instances.
%    $^\dagger$ \cite{GD04} lists 54 but the QuickBB
%    code by the authors distributed at \cite{QBB} achieves 66, in 3 hours.
}
    \caption{Results on the DIMACS graph coloring instances}
    \label{tab:DIMACS}
    
\end{center}
\end{table}

Table~\ref{tab:DIMACS-LB} shows the lower bounds obtained by our
algorithm on unsolved DIMACS graph coloring instances.
Lower bound entries in bold face are improvements
over the previously known lower bounds.
Computation time of the previously best lower bounds
ranges from a few minutes to a week \cite{BWK06}.
Detailed comparison
of lower bound methods, which requires the
normalization of machine speeds,
is not intended here. Rather, the table is
meant to show the potential of our algorithm
as a lower bound procedure.

For many of the instances the improvements are
significant. It can also be seen from this table that
our algorithm performs rather poorly on 
relatively sparse graphs with a large
number of vertices.
 
%\begin{table}[htbp]
\begin{table}[H]
  \begin{center} 
    \begin{tabular}{|c|r|r|r|r|r|r|r|} \hline
    & & &\multicolumn{3}{|c|}{lower bounds computed}&
    \multicolumn{2}{|c|}{previous bounds}\\
    \cline{4-8}
    name & $|V|$& $|E|$ & 1 sec & 1 min & 30 min & lower & upper\\
    \hline  
    DSJC125.1 & 125 & 736 &{\bf 25}&{\bf 30}&{\bf 36 }& 20 & 60 \\ \hline  
    DSJC250.1 & 250 & 3218 &{\bf 45}&{\bf 57}&{\bf 66 }& 43 & 167\\ \hline  
    DSJC250.5 & 250 & 15668 &{\bf 180}&{\bf 197}&{\bf 211}& 114 & 229\\ \hline
    DSJC500.1 & 500 & 12458 & - &{\bf 94}&{\bf 115}& 87 & 409 \\ \hline
    DSJC500.5 & 500 & 62624 & - &{\bf 360}&{\bf 388}& 231 & 479 \\ \hline
    DSJC1000.1 & 1000 & 49629 & - & 172&{\bf 189}& 183 & 896 \\ \hline
    DSJC1000.5 & 1000 & 249826& - &{\bf 724}&{\bf 742}& 469 & 977 \\ \hline
    DSJC1000.9 & 1000 & 449449 & - &{\bf 983}&{\bf 987} & 872 & 991 \\ \hline
    le450\_5a & 450& 5714& 29& 50 & 59 & 79 & 243\\ \hline
    le450\_5b & 450& 5734& -& 49& 57& -& 246\\ \hline
    le450\_5c & 450& 9803& -& 84& 100& 106& 265\\ \hline
    le450\_5d & 450& 9757& -& 94& 99& -& 265\\ \hline
    le450\_15a & 450& 8168& 24& 40& 49$^\dagger$& 94& 262\\ \hline
    le450\_15b & 450& 8169& 23& 32& 47$^\dagger$& -& 258\\ \hline
    le450\_15c & 450& 16680& -& 114& 132& 139& 350\\ \hline
    le450\_15d & 450& 16750& -& 112& 131& -& 353\\ \hline
    le450\_25a & 450& 8260& 11& 23& 25$^\dagger$& 96& 216\\ \hline
    le450\_25b & 450& 8263& 16& 26& 30$^\dagger$& -& 219\\ \hline
    le450\_25c & 450& 17343& 43& 89& 109& 144& 320\\ \hline
    le450\_25d & 450& 17425& -& 93& 112& -& 327\\ \hline
    myciel7 & 191& 2360& 22& 31& 35& 52& 66\\ \hline    
    queen11\_11 & 121 &1980 &{\bf 61}&{\bf 70} &{\bf 77}& 40 & 87 \\ \hline
    queen12\_12 & 144 & 2596 &{\bf 71}&{\bf 76}&{\bf 84}& 55 & 103 \\ \hline
    queen13\_13 & 169 & 3328 &{\bf 70} &{\bf 82}&{\bf 91} & 51& 121 \\ \hline
    queen14\_14 & 196 & 4186 &{\bf 74} &{\bf 87}&{\bf 98} & 55& 140 \\ \hline
    queen15\_15 & 225 & 5180 &{\bf 78}&{\bf 93}&{\bf 104} & 73 & 162 \\ \hline
    queen16\_16 & 256 & 6320 &{\bf 83}&{\bf 99}&{\bf 110} & 79 & 186 \\ \hline
    school1 & 385 & 19095&  73& 112& 125& 149 & 178 \\\hline
    school1\_nsh & 352 &14612 & 78& 105& 118& 132 &  152\\\hline 
\end{tabular}

Previous upper bounds from \cite{Musliu08}; previous lower
bounds from \cite{BWK06}.\\
$^\dagger$ out of memory before time out

\end{center}
    \caption{New lower bounds on the treewidth of unsolved DIMACS graph
    coloring instances}
	\label{tab:DIMACS-LB}
\end{table}

Table~\ref{tab:PACE2017} shows the results on PACE 2017
instances. The prefix ``ex" in the instance names
means that they are for the exact treewidth track.
Odd numbers mean that they are public instances disclosed
prior to the competition for testing and experimenting.
Even numbered instances, not in the list, are secret and to be used in
evaluating submissions. The time allowed to be spent for each instance
is 30 minutes. As can be seen from the table, our algorithm
solves all of the public instances with a large margin in time. 

%\begin{table}[htbp]
\begin{table}[H]
  \begin{center}
  {\small
    \begin{tabular}{|c|r|r|r|r||c|r|r|r|r|} \hline
    name & $|V|$& $|E|$ & $\tw$ & time (secs) &
    name & $|V|$& $|E|$ & $\tw$ & time (secs)\\ \hline
ex001 & 262 & 648 & 10 & 1.48 &ex101 & 1038 & 291034 & 540 & 12 \\ \hline
ex003 & 92 & 2113 & 44 & 8.92 &ex103 & 237 & 419 & 10 & 3.01 \\ \hline
ex005 & 377 & 597 & 7 & 14 &ex105 & 1038 & 291037 & 540 & 12 \\ \hline
ex007 & 137 & 451 & 12 & 0.046 &ex107 & 166 & 396 & 12 & 1.44 \\ \hline
ex009 & 466 & 662 & 7 & 13 &ex109 & 1212 & 1794 & 7 & 43 \\ \hline
ex011 & 465 & 1004 & 9 & 0.50 &ex111 & 395 & 668 & 9 & 4.33 \\ \hline
ex013 & 56 & 280 & 29 & 15 &ex113 & 93 & 488 & 14 & 0.046 \\ \hline
ex015 & 177 & 669 & 15 & 0.046 &ex115 & 963 & 419877 & 908 & 18 \\ \hline
ex017 & 330 & 571 & 9 & 1.11 &ex117 & 77 & 181 & 13 & 18 \\ \hline
ex019 & 291 & 752 & 11 & 40 &ex119 & 84 & 479 & 23 & 16 \\ \hline
ex021 & 318 & 572 & 9 & 2.80 &ex121 & 204 & 1164 & 34 & 76 \\ \hline
ex023 & 690 & 1355 & 8 & 0.91 &ex123 & 122 & 635 & 35 & 14 \\ \hline
ex025 & 92 & 472 & 20 & 1.61 &ex125 & 320 & 8862 & 70 & 8.19 \\ \hline
ex027 & 274 & 715 & 11 & 51 &ex127 & 228 & 527 & 10 & 0.20 \\ \hline
ex029 & 238 & 411 & 9 & 1.33 &ex129 & 737 & 2826 & 14 & 0.97 \\ \hline
ex031 & 219 & 382 & 8 & 12 &ex131 & 292 & 1386 & 18 & 0.17 \\ \hline
ex033 & 363 & 541 & 7 & 50 &ex133 & 522 & 1296 & 11 & 3.94 \\ \hline
ex035 & 247 & 804 & 14 & 3.60 &ex135 & 2822 & 129474 & 87 & 49 \\ \hline
ex037 & 272 & 615 & 10 & 3.43 &ex137 & 196 & 1098 & 19 & 0.34 \\ \hline
ex039 & 56 & 280 & 32 & 58 &ex139 & 334 & 568 & 9 & 8.34 \\ \hline
ex041 & 205 & 341 & 9 & 0.63 &ex141 & 226 & 1168 & 34 & 117 \\ \hline
ex043 & 279 & 513 & 9 & 3.34 &ex143 & 130 & 660 & 35 & 52 \\ \hline
ex045 & 600 & 865 & 7 & 7.80 &ex145 & 48 & 96 & 12 & 18 \\ \hline
ex047 & 1854 & 21118 & 21 & 140 &ex147 & 101 & 606 & 16 & 0.093 \\ \hline
ex049 & 117 & 332 & 13 & 0.078 &ex149 & 698 & 2604 & 12 & 0.75 \\ \hline
ex051 & 136 & 254 & 10 & 0.62 &ex151 & 279 & 733 & 12 & 210 \\ \hline
ex053 & 218 & 383 & 9 & 1.98 &ex153 & 772 & 11654 & 47 & 57 \\ \hline
ex055 & 197 & 813 & 18 & 0.078 &ex155 & 758 & 11580 & 47 & 103 \\ \hline
ex057 & 281 & 9075 & 117 & 0.093 &ex157 & 260 & 467 & 9 & 6.42 \\ \hline
ex059 & 298 & 780 & 10 & 0.47 &ex159 & 582 & 2772 & 18 & 2.37 \\ \hline
ex061 & 158 & 1058 & 22 & 9.59 &ex161 & 1046 & 3906 & 12 & 2.84 \\ \hline
ex063 & 103 & 582 & 34 & 4.76 &ex163 & 244 & 445 & 10 & 4.69 \\ \hline
ex065 & 50 & 175 & 25 & 79 &ex165 & 222 & 742 & 14 & 0.23 \\ \hline
ex067 & 235 & 424 & 10 & 2.70 &ex167 & 509 & 969 & 10 & 7.96 \\ \hline
ex069 & 235 & 441 & 9 & 1.43 &ex169 & 3706 & 42236 & 22 & 530 \\ \hline
ex071 & 253 & 434 & 9 & 2.42 &ex171 & 647 & 2175 & 14 & 0.77 \\ \hline
ex073 & 712 & 1085 & 7 & 15 &ex173 & 536 & 1011 & 10 & 5.05 \\ \hline
ex075 & 111 & 360 & 8 & 0.28 &ex175 & 227 & 1000 & 17 & 113 \\ \hline
ex077 & 237 & 423 & 10 & 2.70 &ex177 & 227 & 759 & 14 & 0.23 \\ \hline
ex079 & 314 & 4943 & 42 & 1.64 &ex179 & 187 & 346 & 10 & 14 \\ \hline
ex081 & 188 & 638 & 6 & 0.55 &ex181 & 109 & 732 & 18 & 0.20 \\ \hline
ex083 & 213 & 380 & 10 & 3.05 &ex183 & 265 & 471 & 11 & 8.61 \\ \hline
ex085 & 229 & 370 & 8 & 11 &ex185 & 237 & 793 & 14 & 0.33 \\ \hline
ex087 & 380 & 5790 & 47 & 46 &ex187 & 240 & 453 & 10 & 2.80 \\ \hline
ex089 & 318 & 576 & 9 & 11 &ex189 & 178 & 4517 & 70 & 3.59 \\ \hline
ex091 & 193 & 336 & 9 & 31 &ex191 & 492 & 1608 & 15 & 21 \\ \hline
ex093 & 454 & 664 & 7 & 27 &ex193 & 1391 & 3012 & 10 & 3.80 \\ \hline
ex095 & 220 & 555 & 11 & 0.59 &ex195 & 216 & 382 & 10 & 6.11 \\ \hline
ex097 & 286 & 4079 & 48 & 2.01 &ex197 & 303 & 1158 & 15 & 0.36 \\ \hline
ex099 & 616 & 923 & 7 & 88 &ex199 & 310 & 537 & 9 & 23 \\ \hline    
\end{tabular}
   }
    \caption{Results on the PACE 2017 public instances}
    \label{tab:PACE2017}
\end{center}
\end{table}

\section*{Acknowledgment}
The author thanks Hiromu Ohtsuka for his help in
implementing the block sieve data structure.
He also thanks Yasuaki Kobayashi for helpful discussions
and especially for drawing the author's attention to
the notion of safe separators. This work would have been non-existent
if not motivated by the timely challenges of
PACE 2016 and 2017. The author is deeply indebted to
their organizers, especially Holger Dell, 
for their dedication and excellent work.
% Non-BibTeX users please use


\begin{thebibliography}{}
\bibitem{ACP87}
S.~Arnborg, D.~G.~Corneil, and A.~Proskurowski:
\newblock Complexity of finding embeddings in a k-tree.
\newblock SIAM Journal on Algebraic Discrete Methods 8, 277-284,
1987

\bibitem{BJ14}
J.~Berg and M.~J\"{a}rvisalo:
\newblock SAT-based approaches to treewidth
computation: an evaluation.
\newblock Proceedings of the 
IEEE 26th International Conference on
Tools with Artificial Intelligence, 328-335, 2014

\bibitem{Bod96}
H.~L.~Bodlaender:
A linear-time algorithm for finding
tree-decompositions of small treewidth.
SIAM Journal on Computing 25(6), 1305-1317, 1996

\bibitem{BFKKT12}
H.~L.~Bodlaender, F.~V.~Fomin, A.~M.~C.~A.~Koster D.~Kratsch,
and  D.~M.~Thilikos:
\newblock On exact algorithms for treewidth.
\newblock ACM Transactions on Algorithms 9(1), 12, 2012
\bibitem{BK06}
H.~L.~Bodlaender and A.~M.~C.~A.~Koster:
\newblock Safe separators for treewidth.
\newblock Discrete Mathematics 306(3), 337-350, 2006
 
\bibitem{BWK06}
H.~L.~Bodlaender, T.~Wolle, and A.~M.~C.~A.~Koster:
\newblock
Contraction and Treewidth Lower Bounds.
\newblock Journal of Graph Algorithms and Applications 10(1), 5-49, 2006

\bibitem{BK08}
H.~L.~Bodlaender and A.~M.~C.~A.~Koster:
\newblock Combinatorial Optimization on Graphs of Bounded Treewidth.
\newblock The Computer Journal 51(3), 255-269, 2008 

\bibitem{BT01}
V.~Bouchitt\'{e} and I.~Todinca:
\newblock Treewidth and minimum fill-in: Grouping
the minimal separators.
\newblock SIAM Journal on Computing 31(1),
212-232,  2001

\bibitem{BT02}
V.~Bouchitt\'{e} and I.~Todinca:
\newblock Listing all potential maximal cliques of a graph. 
\newblock Theoretical Computer Science 276, 17-32, 2002

\bibitem{DHJKKR17}
H.~Dell, T.~Husfeldt, B.~M.~Jansen, P.~Kaski, C.~Komusiewicz, 
and F.~A.~Rosamond:
\newblock The First Parameterized Algorithms and
Computational Experiments Challenge
\newblock LIPIcs-Leibniz International Proceedings in Informatics 
63, 2017.

\bibitem{FKTV08}
F.~V.~Fomin, D.~Kratsch, I.~Todinca, and Y.~Villanger:
\newblock Exact algorithms for treewidth and minimum fill-in.
\newblock SIAM Journal on Computing, 38(3), 1058-1079, 2008  

\bibitem{FV12}
F.~Fomin and Y.~Villanger:
\newblock Treewidth computation and extremal combinatorics.
\newblock Combinatorica 32(3), 289-308, 2012

\bibitem{GD04}
V.~Gogate and R.~Dechter:
\newblock A complete anytime algorithm for treewidth.
\newblock Proceedings of the 20th conference on Uncertainty in artificial
intelligence, AUAI Press, 2004

\bibitem{JT93}
D.~S.~Johnson and M.~A.~Trick (eds.):
\newblock Cliques, coloring, and
satisfiability: second DIMACS implementation challenge.
\newblock
Series in Discrete Mathematics and Theoretical Computer Science, American
Mathematical Society, Vol. 26. American Mathematical Society, 1996
\bibitem{Musliu08}
N.~Musliu:
\newblock An iterative heuristic algorithm for tree decomposition.
\newblock Recent Advances in Evolutionary Computation for Combinatorial
Optimization, 133-150, 2008

\bibitem{PACE17}
PACE 2017 website: https://pacechallenge.wordpress.com/

\bibitem{RS86}
N.~Robertson and P.~D.~Seymour:
\newblock Graph minors. II. Algorithmic aspects of
tree-width.
\newblock Journal of Algorithms 7, 309-322,
1986

\bibitem{RS04}
N.~Robertson and P.~D.~Seymour:
\newblock Graph minors. XX. Wagner's conjecture.
\newblock Journal of Combinatorial Theory, Series B 
92(2), 325-357, 2004 

\bibitem{SH09}
M.~Samer and H.~Veith:
\newblock Encoding treewidth into SAT.
\newblock Proceedings of International Conference on Theory and Applications of
Satisfiability Testing, 45-50, 2009

\bibitem{Savnik13}
I.~Savnik:
\newblock Index data structure for fast subset and superset queries.
\newblock Proceedings of International Conference on Availability, Reliability,
and Security, 134-148, 2013

\bibitem{tcs_PACE_A}
Github repository:
\newblock 
https://github.com/TCS-Meiji/PACE2017-TrackA

% etc
\end{thebibliography}
\end{document}